\documentclass{article}
\usepackage{amsmath}
\usepackage{amssymb}
\usepackage{amsthm}
\usepackage[authoryear]{natbib}
\usepackage{booktabs}
\usepackage{graphicx}
\usepackage{epstopdf}
\usepackage{caption}
\usepackage{bm}
\usepackage{bbm}
\usepackage{xcolor}
\usepackage{pdflscape}
\usepackage{longtable}
\usepackage[referable]{threeparttablex}
\usepackage{mathrsfs}
\usepackage[english]{babel}
\usepackage{subcaption}
\usepackage[a4paper,margin=1.25in]{geometry}
\usepackage[title]{appendix}
\usepackage{float}
\usepackage{multirow}
\usepackage[onehalfspacing]{setspace}

\usepackage[hypertexnames=false]{hyperref}
\definecolor{lightblue}{rgb}{0, 0.4, 0.6}
\hypersetup{
    breaklinks=true,
	colorlinks = true,
	citecolor = {lightblue},
	linkcolor = {lightblue},
	urlcolor = {lightblue}
}

\usepackage[capitalise,nameinlink,noabbrev]{cleveref}
\crefname{equation}{}{}
\crefname{appsec}{Appendix}{Appendices}
\crefname{appsubsec}{Appendix}{Appendices}
\usepackage{autonum} 

\usepackage[inline,shortlabels]{enumitem}
\newlist{assenumerate*}{enumerate*}{4} 
\setlist[assenumerate*]{label=(\roman*), ref=\theassumption.\roman*}
\crefname{assenumerate*i}{Assumption}{Assumptions}

\newlist{assenumerate}{enumerate}{4} 
\setlist[assenumerate]{label=(\roman*), ref=\theassumption.\roman*}
\crefname{assenumeratei}{Assumption}{Assumptions}

\usepackage{tikz}
\usetikzlibrary{shapes}
\usetikzlibrary{decorations.pathreplacing}

\newtheorem{counter}{counter} 
\newtheorem{theorem}[counter]{Theorem}
\crefname{theorem}{Theorem}{Theorems}

\crefname{conjecture}{Conjecture}{Conjectures}
\newtheorem{lemma}[counter]{Lemma}
\crefname{lemma}{Lemma}{Lemmas}

\crefname{definition}{Definition}{Definitions}
\newtheorem{assumption}{Assumption}
\renewcommand{\theassumption}{A\arabic{assumption}} 
\crefname{assumption}{Assumption}{Assumptions}

\crefname{corollary}{Corollary}{Corollaries}
\theoremstyle{definition}

\newcommand{\bs}[1]{\boldsymbol{#1}}

\DeclareMathOperator{\tr}{tr}
\DeclareMathOperator{\E}{E}
\DeclareMathOperator{\var}{var}

\DeclareMathOperator{\lambdamax}{\lambda_{\max}}
\DeclareMathOperator{\lambdamin}{\lambda_{\min}}

\DeclareMathOperator{\vecb}{vecb}

\newcommand{\toprob}{\overset{p}{\rightarrow}}

\newcommand{\R}{\mathbb{R}}
\newcommand{\N}{N}
\newcommand{\1}{\mathbbm{1}}
\newcommand{\nmax}{n_{\max}}
\newcommand{\mumin}{\mu_{\min}}
\DeclareMathOperator{\diag}{diag}
\renewcommand{\th}[1][th]{^{\text{#1}}} 

\newcommand{\citepos}[1]{\citeauthor{#1}'s (\citeyear{#1})} 

\babelhyphenation[english]{re-gres-sor}

\title{Multidimensional clustering in judge designs\footnote{We thank Tom Boot, John Chao and Norman Swanson for insightful comments and discussions.}}
\author{Johannes W. Ligtenberg\\ University of Groningen\\\texttt{j.w.ligtenberg@rug.nl} \and Tiemen Woutersen\\ University of Arizona \\\texttt{woutersen@arizona.edu}}
\date{June 13, 2024}

\begin{document}
\maketitle

\begin{abstract}
    \noindent 
    Estimates in judge designs run the risk of being biased due to the many judge identities that are implicitly or explicitly used as instrumental variables. The usual method to analyse judge designs, via a leave-out mean instrument, eliminates this many instrument bias only in case the data are clustered in at most one dimension. What is left out in the mean defines this clustering dimension. How most judge designs cluster their standard errors, however, implies that there are additional clustering dimensions, which makes that a many instrument bias remains. We propose two estimators that are many instrument bias free, also in multidimensional clustered judge designs. The first generalises the one dimensional cluster jackknife instrumental variable estimator, by removing from this estimator the additional bias terms due to the extra dependence in the data. The second models all but one clustering dimensions by fixed effects and we show how these numerous fixed effects can be removed without introducing extra bias. A Monte-Carlo experiment and the revisitation of two judge designs show the empirical relevance of properly accounting for multidimensional clustering in estimation.
    \\[1ex]
    \textit{Keywords}: judge design, multidimensional clustering, fixed effect, instrumental variable, jackknife.\\
    \textit{JEL codes}: C21, C26, C36, K14.
\end{abstract}

\section{Introduction}
Due to the plausibility of their identification strategy, judge designs are popular to identify the effect of a judge's decision on an outcome variable. These studies use that cases are randomly assigned to judges and that the judges differ systematically in their decisions. Which judge presides can therefore serve as an instrumental variable (IV) for its endogenous verdict.

The number of judges and hence the number of IVs, is typically large and thus may cause a many instrument bias. The conventional method to analyse judge designs conceals that many IVs are used because, in a first step, it aggregates the judge identities into a single instrument. In general, however, combining the IVs does not remove the many instrument bias.

Most judge designs aggregate the IVs via a leave-out mean. This specific combination method overcomes the many instrument bias, but it does so only in case the data are clustered in at most one dimension. This dimension is implicitly defined by what is left out in the mean. The vast majority of judge designs, however, cluster their data in different dimensions when estimating the variance. The effect of these additional clustering dimensions on the many instrument bias is not well studied and is therefore the topic of this paper.

To show the empirical relevance of multidimensional clustering, we analyse 65 papers that use a judge design, listed in \citet{chyn2024examiner}. \cref{app:overview} gives the details of the overview to which we will return throughout the paper. For starters, we note that 58 papers use some form of leave-out, which therefore defines a clustering dimension. 51 out of these 58 cluster their standard errors in either the main specification or in the robustness checks on a different dimension, hence implying multidimensional clustering. Furthermore, eleven papers explicitly allow for multidimensional clustered data by using two-way clustered standard errors. Another seven papers use one way clustered standard errors, but are concerned with additional dependence, such that in robustness checks they cluster in other dimensions.

In this paper we study the effect of multidimensional clustering on the many instrument bias in judge designs. We start by analysing the effect of clustering in a single dimension. We show that clustering in the data exacerbates the many instrument bias in the conventional two stage least squares (2SLS) estimator compared to independent data. Moreover, clustering makes that the usual solution to the many instrument bias, the jackknife instrumental variable estimator  \citep[JIVE,][]{angrist1999jackknife,blomquist1999small}, no longer fully removes this bias. In case the data are clustered in a single dimension the cluster adaptation to the JIVE, the CJIVE  \citep{ligtenberg2023inference,frandsen2023cluster}, does remove the many instrument bias.

Furthermore, this CJIVE is equivalent to the usual approach to analyse judge designs which combines the judge dummies into a single instrument via a leave-out mean. This equivalence explains why there is no many instrument bias in judge designs with the data clustered in a single dimension. In addition, the equivalence allows us to focus on the model with an instrument for every judge.

The CJIVE only removes the bias due to clustering in a single dimension. Clustering in additional dimensions adds bias terms to the 2SLS estimator, which are not removed by the CJIVE. We remove these additional terms in a new estimator, the multidimensional (MD) CJIVE, which can be used for many instrument bias free estimation with multidimensional clustered data.

Judge designs often include control variables in their model. Removing these control variables, such that JIVE or one of its variants can be applied, can introduce extra clustered dependence in the data \citep{chao2023jackknife}. In particular, in case the control variables take the form of a fixed effect (FE), the control variables are removed by subtracting group means from the observations. Since the group mean contains all observations from the group, this introduces additional clustering in the data. Our overview of 65 papers shows that FE are commonplace, since all papers use them in some form.

For the MD CJIVE this extra dependence means that additional terms need to be removed to eliminate the bias, but poses no further problems. Another way to account for the extra dependence caused by removing the FE is via the FE JIVE from \citet{chao2023jackknife}. This estimator adapts the JIVE to correctly remove the FE in case the data are independent. The FE JIVE is therefore an alternative to the MD CJIVE in case all clustered dependence can be modelled by including FE. Since it is not always plausible that the clustering in every dimension can be captured by FE, we note that the FE CJIVE extension of the FE JIVE from \citet{ligtenberg2023inference} allows for more general dependence in one dimension. 

We briefly compare the MD CJIVE and the FE CJIVE. We find that the FE CJIVE may work well even if the clustering cannot be fully captured by FE. This happens when the elements of the instrument projection matrix which are used to weight the bias terms, do not covary systematically with the covariances of the errors. Furthermore, we show that in some models the more general MD CJIVE does not identify the parameter of interest, whereas the FE CJIVE does.

Our overview shows that the judge level is a particularly popular clustering dimension. Out of the 65 papers 30 cluster on the judge in the main specification. Another 8 papers do so in robustness checks. Clustering on this level, suggests that after accounting for all variables in the model, there remains dependence between the errors of the cases handled by the same judge. We show that this type of dependence is especially heavily weighted in the bias terms, but that in general neither the MD CJIVE nor the FE CJIVE can correct for clustering on the judge level.  

We compare the different estimators in a Monte-Carlo simulation where we generate data with clustering in multiple dimensions. We find that 2SLS and JIVE are biased in this case. The bias of the CJIVE is close to zero in some simulation set ups, but larger in others. Furthermore, the FE (C)JIVE can be biased in case general clustering is modelled with FE. The estimates of the MD CJIVE perform especially well in case there is complex clustering on both dimensions.

Finally, we revisit two judge designs to study the effect of clustering on the many instrument bias in the coefficient estimates. The first is by \citet{di2013criminal} who estimate the effect of electronic monitoring (EM) as an alternative to incarceration on recidivism. In this study there are possibly three more clustering levels beyond the individual level that is implied by the leave-out mean instrument. When we take these dimensions into account with the MD CJIVE, we find that the effect of EM on recidivism is slightly more negative than when we ignore the additional clustering. The FE CJIVE yields no interpretable results.

The second study we revisit is the one by \citet{agan2023misdemeanor}. These authors estimate the effect of nonprosecution on recidivism. Due to FE there may be two additional clustering dimensions that are not taken into account in estimation. The MD CJIVE that removes from the estimator the terms associated to these dimensions, yields estimates that are less negative than when these dimensions are ignored. The FE CJIVE estimates are again less stable, but if we ignore the one outlier, we find that the estimates that correctly incorporate the control variables are more negative than those that naively remove the control variables.

\paragraph{Literature review}
That many instruments bias conventional estimators has long been acknowledged. \citet{angrist1999jackknife} for example used the theory from \citet{nagar1959bias} to show how the bias of the 2SLS estimator increases in the number of instruments. \citet{angrist1999jackknife} and \citet{blomquist1999small} propose JIVEs that are robust against many instruments as they remove the terms in 2SLS that cause the bias. We take a similar approach for the MD CJIVE.

Also for judge designs a many instrument bias has been a concern from the very beginning. \citet{kling2006incarceration}, who initiated the judge design literature, used JIVE on the separate judge identity dummies. In later studies it has been more common to use a leave-out mean instrument and 2SLS instead of JIVE, as also comes forth from our overview in \cref{app:overview}. Although with this estimation method it is not immediately clear that many instruments are used, the topic kept resurfacing in the judge design literature. See for example the notes in \citet{hull2017examiner} and \citet{mikusheva2021inference}. 

The jackknife that is used in the JIVE is more broadly applicable to overcome many instrument problems in IV models. For instance, it has been applied to adapt LIML and Fuller type estimators in \citet{hausman2012instrumental} and to derive identification robust tests in \citet{crudu2021inference}, \citet{mikusheva2021inference} and \citet{matsushita2020jackknife}. In the latter setting the jackknife for independent data was generalised to allow for one-way clustered data by \citet{ligtenberg2023inference}. Subsequently, \citet{frandsen2023cluster} used this cluster jackknife in a cluster JIVE.

The cluster jackknife allows for data clustered in only a single dimension. Ever since \citet{cameron2011robust}, however, two- or multi-way clustered standard errors are commonplace. In this paper we therefore generalise the cluster jackknife to a MD cluster jackknife. We furthermore note that the focus of the paper by \citet{cameron2011robust} and the clustering literature in general, has mainly been on the effect of clustering on variance estimators and inference, whereas this paper studies the effect of clustering on bias in estimators for the coefficients.

A discussion on clustering specified to judge designs followed from the discussion on clustering in the treatment effect literature. There they showed that which clustering dimensions matter depends on which view one takes on the data generating process (DGP). In particular, they discern a sampling based view in which the data are drawn from an infinitely sized super population, and a design based view where a large part of the population is observed, but where the assignment of treatment differs over samples \citep{abadie2020sampling,abadie2023should}. \citet{chyn2024examiner} took the latter perspective for judge designs. They argued that the observations should be clustered, both for estimation and for inference, in how they are assigned to the judges. Here we abstract from the different views on the DGP and we take the clustering dimensions as given.

\paragraph{Outline}
In \cref{sec:model} we present the general judge design and the effects of clustering on 2SLS, JIVE and CJIVE. This section also shows the equivalence of CJIVE and 2SLS with a leave-out instrument. \Cref{sec:MDC} discusses multi-dimensional clustering and the MD CJIVE. The inclusion of FE and how the FE (C)JIVE can be an alternative to the MD CJIVE are covered in \cref{sec:FE}. In the section that follows we compare the two. \Cref{sec:clustering on judge} adds a note on clustering at the judge level. We assess the  performance of the different estimators in the Monte-Carlo simulations in \cref{sec:MC} and the empirical applications in \cref{sec:applications}. \Cref{sec:conclusion} concludes.

\paragraph{Notation}
Throughout we use the following notation. For any matrix $\bs A$, define the projection matrices $\bs P_{A}=\bs A(\bs A'\bs A)^{-1}\bs A'$ and $\bs M_{A}=\bs I-\bs P_{A}$. Let $\odot$ be the Hadamard product. Write $\bs D_{A}=\bs I\odot\bs A$ if $\bs A$ is a square matrix and $\bs D_{v}$ the diagonal matrix with the elements of $\bs v$ on its diagonal if $\bs v$ is a vector. Then define $\dot{\bs A}=\bs A-\bs D_{A}$ as the matrix with the diagonal elements set to zero. Finally, $\sum_{g\neq h}$ can be the double sum $\sum_{g=1}^G\sum_{h=1,h\neq g}^H$, whether this is the case and the values of $G$ and $H$ depend implicitly on the context. Other sums are defined similarly.

\section{Model and clustering}\label{sec:model}
In the general judge design, one is interested in the effect of a decision taken by an expert on an outcome variable. The expert and the decision can vary from judges needing to decide on conviction to patent examiners
needing to decide on whether a patent is granted. See \citet{chyn2024examiner} for an overview. For ease of exposition we stay close to the original setting of \citet{kling2006incarceration} and call the expert a judge, what the judge needs to decide on a case and a positive decision a conviction.

The judge may be influenced by factors unobservable to the researcher that affect both his decision and the outcome variable. This makes the decision endogenous and consequently the effect of the decision cannot be estimated with ordinary least squares (OLS). Judge designs therefore take a different approach and leverage that judges differ systematically in their conviction rates. That is, some judges are stricter than others. Furthermore, judges are often randomly assigned to cases. These two points combined make that the identity of the judge is a valid and relevant IV for the decision on the outcome variable.

To be precise, we consider the following model to estimate the coefficient of interest $\beta$
\begin{equation}\label{eq:judge IV}
    \begin{split}
        y_i&=X_i\beta+\varepsilon_i\\
        X_i&=\bs Z_i'\bs\Pi+\eta_i,
    \end{split}
\end{equation}
where for case $i=1,\dots,n$, $y_i\in\R$ is the outcome variable, $X_i\in\R$ is a dummy variable indicating conviction, $\bs Z_i\in\R^k$ are mutually exclusive dummy variables indicating which of the $k$ judges handles the case and $\varepsilon_i\in\R$ and $\eta_i\in\R$ are the second and first stage errors. We stack the instruments per case in $\bs Z=(\bs Z_1,\dots,\bs Z_n)'$ and similarly for the other variables. For now we assume there are no control variables, but we come back to that in \cref{sec:FE}.

Since each judge can handle only a limited amount of cases, we allow $k$ to grow with $n$ in the asymptotic approximations. We leave the dependence of $k$, $\bs\Pi$ and $\bs Z$ on $n$ implicit. We make the following assumptions on the judge identity dummies to ensure they are valid and relevant instruments. 
\begin{assumption}
    \begin{assenumerate*}
        \item $\E(\varepsilon_i|\bs Z)=0$ and $\E(\eta_i|\bs Z)=0$ for all $i$.
        \item $\bs\Pi\neq\bs 0$.
    \end{assenumerate*}
\end{assumption}

For now the data can be clustered on a single dimension, which we generalise to more dimensions later in the paper. That the data are clustered means that the data can be partitioned in groups or clusters, and that the observations within a cluster can have any dependence between them, whereas observations from different clusters are independent. We formalise the notion of clustering in the next assumption for which we use the notation from \citet{ligtenberg2023inference} and that we repeat below. 

The $n$ observations can be grouped in $G$ clusters with $n_g$, $g=1,\dots,G$, observations per cluster. Denote the indices of the observations in cluster $g$ by $[g]$ and for any $n\times m$ matrix $\bs A$ let $\bs A_{[g]}$ be the $n_g\times m$ matrix with only the rows indexed by $[g]$ selected. For later purposes, if $\bs A$ is $n\times n$ we also define $\bs A_{[g,h]}$ to be the $n_g\times n_h$ submatrix with the rows in $[g]$ and the columns in $[h]$ selected.

\begin{assumption}
    \begin{assenumerate*}
        Conditional on $\bs Z$, $\{\bs\varepsilon_{[g]},\bs\eta_{[g]}\}_{g=1}^G$ is independent with mean zero.
    \end{assenumerate*}
\end{assumption}

In IV models, one usually estimates $\beta$ via 2SLS. When the number of instruments is large, 2SLS is inconsistent and there is a so-called many instrument bias. To see where this bias comes from and why clustered data can increase this bias relative to independent data, write the 2SLS estimator as $\hat{\beta}=(\bs X'\bs P_{Z}\bs X)^{-1}\bs X'\bs P_{Z}\bs y=\beta+(\bs X'\bs P_{Z}\bs X)^{-1}\bs X'\bs P_{Z}\bs\varepsilon$. Assume that a law of large numbers works on both the part in the inverse and the part after it. Then we can write the latter as
\begin{equation}\label{eq:bias}
    \begin{split}
        \E(\bs X\bs P_{Z}\bs\varepsilon|\bs Z)=\E(\bs\Pi'\bs Z'\bs P_{Z}\bs\varepsilon+\bs\eta'\bs P_{Z}\bs\varepsilon|\bs Z)=\E(\bs\eta'\bs P_{Z}\bs\varepsilon|\bs Z)=\E(\sum_{g=1}^G\sum_{i,j\in[g]}\eta_iP_{Z,ij}\varepsilon_j|\bs Z).
    \end{split}
\end{equation}

Since $\bs Z_i$ are judge dummies, we have an explicit expression for $\bs P_{Z}$. See \cref{app:equivalence} for more details.
\begin{equation}\label{eq:Pz}
    \begin{split}
        P_{Z,ij}&=\begin{cases}
            1/n_{J(i)} &\text{ if } J(i)=J(j)\\
            0 &\text{ otherwise},
        \end{cases}
    \end{split}
\end{equation}
where $J(i)$ is the judge that handles case $i=1,\dots,n$ and $n_{j}$ is the number of cases that judge $j=1,\dots,k$ oversees. This shows that the elements of $\bs P_{Z}$ are well above zero, unless each judge handles an infinite amount of cases.

Therefore $\cref{eq:bias}$ becomes
\begin{equation}\label{eq:bias decompose}
    \begin{split}
        \E(\sum_{g=1}^G\sum_{i,j\in[g]}\eta_iP_{Z,ij}\varepsilon_j|\bs Z)&=\E(\sum_{g=1}^G\sum_{i,j\in[g]}\frac{1}{n_{J(i)}}\1\{J(i)=J(j)\}\eta_i\varepsilon_j|\bs Z)\\
        &=\sum_{i=1}^n\frac{1}{n_{J(i)}}\E(\eta_i\varepsilon_i|\bs Z)+\sum_{g=1}^G\sum_{\substack{i,j\in[g]\\i\neq j}}\frac{1}{n_{J(i)}}\1\{J(i)=J(j)\}\E(\eta_i\varepsilon_j|\bs Z),
    \end{split}
\end{equation}
where $\1\{\cdot\}$ is the indicator function. The first term captures the usual many instrument bias that is also present in case the observations are independent. The second term is the additional bias to due to clustering in the data.

\cref{eq:bias decompose} suggests that a solution to the many instrument bias is to set the elements in $\bs P_{Z}$ corresponding to the two terms to zero. The JIVE sets the elements corresponding to the first term to zero, which yields $\hat{\beta}_{J}=(\bs X'\dot{\bs P}_{Z}\bs X)^{-1}\bs X'\dot{\bs P}_{Z}\bs y$.

Clearly, in case the data are clustered JIVE does not remove all bias terms. This is solved by the CJIVE, which also sets the elements of $\bs P_Z$ corresponding to the second term in \cref{eq:bias decompose} to zero. We write this as $\hat{\beta}_{CJ}=(\bs X'\ddot{\bs P}_{Z}\bs X)^{-1}\bs X'\ddot{\bs P}_{Z}\bs y$, where for any square matrix $\bs A$, $\ddot{\bs A}$ is the matrix with in the rows indexed by $[g]$ the elements in the columns indexed by $[g]$ set to zero, $g=1,\dots,G$. For instance, if the observations are sorted by cluster, the block diagonal elements of $\bs P_Z$ are set to zero.

In this section we have focused on the judge design that estimates $\beta$ with the judge identity dummies as instruments. An alternative, more common way of estimating $\beta$ is via 2SLS with a leave-out mean conviction rate as instrument. The leave-out mean conviction rate is a single instrument with as $i\th$ value the average conviction rate of judge $J(i)$, excluding any cases from the cluster that $i$ belongs to. For example, in judicial contexts it is common to exclude any case that involves the defendant of case $i$, as these cases are expected to be dependent. Therefore, the leave-out mean models the data as clustered on the defendant.

In \cref{app:equivalence} we show that CJIVE and 2SLS with a leave-out conviction rate are equivalent up to weighting. This result was anticipated by  \citet{french2014effect}, \citet{aizer2015juvenile}, \citet{dobbie2017consumer}, \citet{doyle2017evaluating},  \citet{arnold2018racial}, \citet{dobbie2018effects}, \citet{ribeiro2019pretrial}, \citet{hyman2018can}, \citet{baron2022there} and \cite{chyn2024examiner}, among others. This also explains why few authors worry about a many instrument bias in judge designs. Given this equivalence result, we focus also in what follows on the IV model with judge dummies as instruments.

\section{Multidimensional clustering}\label{sec:MDC}
Studies that use a leave-out mean conviction rate as instrument implicitly model the data as clustered on the dimension that is left out, often cases or individuals, and remove the many instrument bias in the estimators due to the dependence in this dimension. The overview of judge designs shows, however, that there are often more dimensions on which the data are dependent.

These judge designs adjust their standard errors to the additional dependence, but leave the point estimators unchanged. Similar to when clustering in a single dimension is ignored, this can lead to a many instrument bias. To formalise this, we introduce the following notation.

The data are clustered on $C$ dimensions with on dimension $c=1,\dots, C$, $G^{(c)}$ groups or clusters. On dimension $c$ there are $n_{g}^{(c)}$ observations in cluster $g=1,\dots,G^{(c)}$. The number of cases handled by judge $j=1,\dots,k$ that are in cluster $g$ on dimension $c$ is given by $n_{j,g}^{(c)}$. We denote the cluster that case $i=1,\dots,n$ is in on dimension $c$ as $G^{(c)}(i)$. Define $[g]^{(c)}$ as indices of the cases in cluster $g$ on dimension $c$. Take $[G(i)]=\bigcup_{c=1}^C[G^{(c)}(i)]^{(c)}$ as the indices of cases that share a cluster with case $i$ in some dimension. 

Then multidimensional clustering is specified by the following assumption.

\begin{assumption}
    Conditional on $\bs Z$ and for any $i=1,\dots, n$ and $j\notin [G(i)]$, $\{\varepsilon_{i},\eta_{i}\}$ is mean zero and independent from $\{\varepsilon_{j},\eta_{j}\}$.
\end{assumption}

For expositions purposes we will focus in the remainder of this section on the case in which there are $C=2$ clustering dimension. Our results extend straightforwardly to more dimensions.

We can then generalise the bias formula given in \cref{eq:bias decompose} as 
\begin{equation}\label{eq:bias decompose md}
    \begin{split}
        \E(\bs\eta'\bs P_{Z}\bs\varepsilon|\bs Z)
        &=\E(\sum_{i=1}^n\frac{\eta_i\varepsilon_{i}}{n_{J(i)}}+\sum_{i=1}^n\sum_{j\in[G^{(1)}(i)]^{(1)}\setminus\{i\}}\1\{J(i)=J(j)\}\frac{\eta_i\varepsilon_j}{n_{J(i)}}\\
        &\quad+\sum_{i=1}^n\sum_{j\in[G^{(2)}(i)]^{(2)}\setminus[G^{(1)}(i)]^{(1)}\setminus\{i\}}\1\{J(i)=J(j)\}\frac{\eta_i\varepsilon_j}{n_{J(i)}}|\bs Z)
    \end{split}
\end{equation}
Compared to \cref{eq:bias decompose} there is now a third term that captures the many instrument bias due to the dependence in the second clustering dimension. This additional term is not removed by a CJIVE that clusters on the first dimension.

We remove the additional bias term in a new estimator called the MD CJIVE. Similar to the CJIVE, the MD CJIVE uses a projection matrix where the elements corresponding to the bias terms are set to zero. To write this estimator succinctly, we define $\bs S^{(c)}$ to be an $n\times n$ selection matrix with in row $i$ zeroes everywhere except in the columns indexed by $[G^{(c)}(i)]^{(c)}$. Also, let $\bs S^{(1,2)}=\bs S^{(1)}\odot\bs S^{(2)}$ be the selection matrix with in row $i$ ones in the columns corresponding to $[G(i)]$. For any $n\times n$ matrix $\bs A$ we write $\bs B^{(c)}_{A}=\bs S^{(c)}\odot\bs A$ and $\bs B^{(1,2)}_{A}=\bs S^{(1,2)}\odot\bs A$. Then $\dddot{\bs P}_{Z}=\bs P_{Z}-\bs B^{(1)}_{P_Z}-\bs B^{(2)}_{P_Z}+\bs B^{(1,2)}_{P_Z}$ and the MD CJIVE $\hat{\beta}_{MDCJ}=(\bs X'\dddot{\bs P}_{Z}\bs X)^{-1}\bs X'\dddot{\bs P}_{Z}\bs y$. The terms that are subtracted remove the elements of $\bs P_{Z}$ causing a bias in both clustering dimensions. However, since the cases that are in the same cluster on both dimensions are subtracted twice, we need to add back these elements, which we do with the final term.

In \cref{app:consistency MD CJIVE} we show consistency of the MD CJIVE under high-level assumptions similar to \citet{frandsen2023cluster}. In \cref{app:variance MD CJIVE} we also present a variance estimator for $\hat{\beta}_{MDCJ}-\beta$, although we make no claims about asymptotic normality. 

\section{Fixed effects}\label{sec:FE}
So far we have considered a judge design model without exogenous control variables. In practice however, most studies include control variables in their model. The overview of judge designs showed that a particular popular type of control variable are FE, for example to control for time or location effects. A model with control variables is
\begin{equation}\label{eq:judge IV exogenous}
    \begin{split}
        \bs y&=\bs X\beta+\bs W\bs\Gamma_2+\bs\varepsilon\\
        \bs X&=\bs Z\bs\Pi+\bs W\bs\Gamma_1+\bs\eta,
    \end{split}
\end{equation}
where $\bs W\in\R^{n\times l}$ are the exogenous control variables and the other variables are as before. Similar to the number of instruments, the number of control variables, $l$, can increase with the sample size, but we again suppress dependence of $l$, $\bs W$, $\bs\Gamma_1$ and $\bs\Gamma_2$ on $n$.

Motivated by the Frisch-Waugh-Lowell theorem, the usual way to control for the exogenous variables $\bs W$, is to project them out. This is done by pre-multiplying both equations in \cref{eq:judge IV exogenous} by $\bs M_{W}$. Then conventional methods such as 2SLS or (C)JIVE, can estimate $\beta$ from this transformed model.

Projecting out a small number of exogenous control variables is in most cases an innocent transformation. When the control variables are numerous on the other hand, as is often the case with FE, projecting them out introduces extra dependence in the data. For FE this dependence takes the form of extra clustering dimensions, since the FE are removed by subtracting the group mean from every observation. If there are many FE, the group mean is taken over a small number of observations, making the contribution of each relatively large. Consequently, all observations within a group have nonnegligible dependence.

For the MD CJIVE the extra within group dependence induced by FE, is no obstacle to estimation without many instrument bias. It only adds an additional clustering dimension, with the clusters defined by the groups.

Ignoring the additional dependence in the transformed model on the other hand, can aggravate or introduce a many instrument bias, as \citet{chao2023jackknife} show. They consider the JIVE for a IV model with many instruments, FE and independent observations. To see why projecting out the FE can lead to a many instrument bias, write JIVE for the transformed model as
\begin{equation}
    \begin{split}
        \hat{\beta}_{J}=(\bs X'\bs M_{W}\dot{\bs P}_{M_{W}Z}\bs M_{W}\bs X)^{-1}\bs X'\bs M_{W}\dot{\bs P}_{M_{W}Z}\bs M_{W}\bs y.
    \end{split}
\end{equation}
For it to have no many instrument bias, we require $\E(\bs\eta'\bs M_{W}\dot{\bs P}_{\bs M_{W}Z}\bs M_{W}\bs\varepsilon|\bs Z)$ to be zero. Although $\dot{\bs P}_{\bs M_{W}Z}$ has zero diagonal, there is no guarantee that $\bs M_{W}\dot{\bs P}_{\bs M_{W}Z}\bs M_{W}$ has. Therefore the products of $\eta_i$ and $\varepsilon_i$ may have nonzero weight. Alternatively, $\{[\bs M_{W}\bs\eta]_{i},[\bs M_{W}\bs\varepsilon]_{i}\}_{i=1}^n$ is not a sequence of conditionally independent variables, which yields a many instrument bias as discussed in \cref{sec:model}.

\citet{chao2023jackknife} provide an alternative JIVE, called the FE JIVE, that ensures that there is no many instrument bias, also after projecting out the control variables. They achieve this by solving for a diagonal matrix $\bs D_{\vartheta}$ which is such that $\bs P_{M_WZ}-\bs M_{W,Z}\bs D_{\vartheta}\bs M_{W,Z}$ has zero diagonal. The solution, $\bs D_{\vartheta}$ is given by the diagonal matrix with on its diagonal $\bs\vartheta=(\bs M_{W,Z}\odot\bs M_{W,Z})^{-1}\diag(\bs P_{M_WZ})$, where $\diag(\bs A)$ is a vector with the diagonal elements of $\bs A$. Using this adapted projection matrix instead of $\dot{\bs P}_{Z}$ in the FE JIVE yields $\hat{\beta}_{FEJ}=(\bs X'(\bs P_{M_WZ}-\bs M_{W,Z}\bs D_{\vartheta}\bs M_{W,Z})\bs X)^{-1}\bs X(\bs P_{M_WZ}-\bs M_{W,Z}\bs D_{\vartheta}\bs M_{W,Z})\bs y$. The bias term of the FE JIVE is given by $\E(\bs\eta'\bs M_{W}(\bs P_{M_WZ}-\bs M_{W,Z}\bs D_{\vartheta}\bs M_{W,Z})\bs M_{W}\bs\varepsilon|\bs Z)=\E(\bs\eta'(\bs P_{M_WZ}-\bs M_{W,Z}\bs D_{\vartheta}\bs M_{W,Z})\bs\varepsilon|\bs Z)=0$ by its zero diagonal, provided the observations are independent.

The FE JIVE is an alternative for the MD CJIVE to estimate $\beta$ many instrument bias free when there is clustering in multiple dimensions. Namely, if we assume that all clustering can be captured by cluster FE, such that after including these FE $\{\varepsilon_i,\eta_i\}_{i=1}^n$ in \cref{eq:judge IV exogenous} is an independent sequence, we can readily apply \citepos{chao2023jackknife} FE JIVE.

The assumption that the clustering in every dimension can be fully captured by FE, is not always plausible. In such cases, the FE CJIVE generalisation of the FE JIVE from \citet{ligtenberg2023inference} may be applicable. This generalisation allows for more general clustering in one dimension. We repeat the FE CJIVE in slightly adapted form below, for which we use the following notation. For any matrix $\bs A$, let $\vecb(\bs A)$ the vectorisation of the elements in the $\bs A_{[g,g]}$, $g=1,\dots,G$ and $\vecb^{-1}(\cdot)$ the operator that constructs out of a vector a matrix such that $\vecb^{-1}(\vecb(\bs A))=\bs S\odot\bs A$, where $\bs S$ is a selection matrix as before, but with the clustering dimension suppressed. Finally, let $*$ be the Kathri-Rao or blockwise Kronecker product, where we take the blocks to be defined by the clusters.

Assume that there is a single clustering dimension and that the observations are ordered per cluster. Then $\bs P_{M_WZ}-\bs M_{W,Z}\bs H\bs M_{W,Z}$ with  $\bs H=\vecb^{-1}((\bs M_{W,Z}*\bs M_{W,Z})^{-1}\vecb(\bs P_{M_{W}Z}))$ has a zero block diagonal and therefore can be used in a FE CJIVE. That is $\hat{\beta}_{FECJ}=(\bs X'(\bs P_{M_WZ}-\bs M_{W,Z}\bs H\bs M_{W,Z})\bs X)^{-1}\bs X'(\bs P_{M_WZ}-\bs M_{W,Z}\bs H\bs M_{W,Z})\bs y$ has no many instrument bias.

In \cref{app:consistency FE CJIVE} we show consistency of the FE CJIVE under assumptions similar to \citet{chao2023jackknife}. We also provide a variance estimator for $\hat{\beta}_{FECJ}-\beta$ in \cref{app:variance FE CJIVE}, but again we make no claims about asymptotic normality.

\section{Comparison of MD CJIVE and FE CJIVE}
In this section we compare the MD CJIVE and FE CJIVE. We first show why the FE CJIVE may work well even if there is general clustering that cannot be captured by a FE. Next, we discuss a case in which the MD CJIVE does not identify the parameter of interest, whereas a FE CJIVE does.

\subsection{Modeling general clustering with a fixed effect}\label{sec:FE for general}
In some cases including cluster FE will eliminate most or even all of the many instrument bias, also when there is more general clustering in the data. Namely, including a cluster FE will set the average covariance within a cluster to zero. Consequently, if the elements of the jackknifed projection matrix that weigh these covariances in the bias do not vary too much or not systematically with the covariances, the weighted average of the covariances will also be close to zero, causing no or little bias.

To formalise this argument, consider the FE JIVE with clustering on a single dimension which is modeled with a FE. As before, let $\E(\bs\eta\bs\varepsilon'|\bs Z)=\bs\Xi$. Also, for notational convenience write $\tilde{\bs P}_{Z}$ for the FE JIVE projection matrix.  The term causing the many instrument bias can be written as
\begin{equation}
    \begin{split}
        \E(\bs\eta'\tilde{\bs P}_{Z}\bs\varepsilon|\bs Z)&=\E(\tr(\tilde{\bs P}_{Z}\bs\varepsilon\bs\eta')|\bs Z)=\tr(\tilde{\bs P}_{Z}\bs\Xi')\\
        &
        =\tr(\tilde{\bs P}_{Z}\dot{\bs\Xi}')=\tr(\tilde{\bs P}_{Z}\bs M_{W}\dot{\bs\Xi}'\bs M_{W})=\sum_{g=1}^G\sum_{\substack{i,j\in[g]\\i\neq j}}[\bs M_{W}\dot{\bs\Xi}'\bs M_{W}]_{ji}\tilde{P}_{Z,ij},
    \end{split}
\end{equation}
where we used that $\tilde{\bs P}_{Z}$ has a zero diagonal. Then, $\bs M_{W}\dot{\bs\Xi}'\bs M_{W}=\dot{\bs\Xi}'-\bs P_{W}\dot{\bs\Xi}'-\dot{\bs\Xi}'\bs P_{W}+\bs P_{W}\dot{\bs\Xi}'\bs P_{W}$ is the matrix of covariances minus the cluster row and cluster column averages, corrected for over counting by adding the cluster row-column averages, which makes the cluster averages of the elements in $\bs M_{W}\dot{\bs\Xi}'\bs M_{W}$ zero. Consequently, if the $\tilde{P}_{Z,ij}$ do not differ systematically within a cluster, the bias will be zero.

An example when this happens is when the elements of $\tilde{\bs P}_{Z}$ are independent of those in $\dot{\bs\Xi}$ and have constant within cluster unconditional expectation $\mu_{g}$. Then we can write
\begin{equation}
    \begin{split}
        \E(\bs\eta'\tilde{\bs P}_{Z}\bs\varepsilon)&=\E(\sum_{g=1}^G\sum_{\substack{i,j\in[g]\\i\neq j}}[\bs M_{W}\dot{\bs\Xi}'\bs M_{W}]_{ji}\tilde{P}_{Z,ij})=\E(\sum_{g=1}^G\sum_{\substack{i,j\in[g]\\i\neq j}}\E([\bs M_{W}\dot{\bs\Xi}'\bs M_{W}]_{ji})\E(\tilde{P}_{Z,ij})\\
        &=\sum_{g=1}^G\mu_{g}n_g(n_g-1)
        \E(\sum_{\substack{i,j\in[g]\\i\neq j}}[\bs M_{W}\dot{\bs\Xi}'\bs M_{W}]_{ji})=0.
    \end{split}
\end{equation}

This result is similar to the one in \citet{chetverikov2023standard}, which states that with an exogenously assigned covariate of interest, it is often not necessary to control for clustering if the model includes an intercept. The argument is that after projecting out the intercept, the exogenous regressor is mean zero. Then, given that the regressor is independent from the errors, the expectation of the product of the errors and the regressor can be separated making the entire expectation zero, which removes the need to account for clustering.

The argument here is similar in the sense that we can split the expectation because one factor in the product is independent form the other. Another similarity is that one of these expectations is zero, which simplifies the expression. The argument is different in the sense that we consider a bias stemming from covariances, rather than a variance estimator itself and, more importantly, the inclusion of an intercept or FE does not make the exogenously assigned part, the $\tilde{\bs P}_{Z}$, mean zero in general. Rather, the covariances are made zero by the FE.

\subsection{When FE CJIVE works, but MD CJIVE does not}
The MD CJIVE can handle more general types of clustering than FE CJIVE. This robustness comes at a price. There are cases in which the MD CJIVE cannot identify the parameter of interest, whereas the FE CJIVE can. The following example illustrates our point.

Suppose that, similar to the judge instruments, the data can be clustered and there is one instrument per cluster. That is, $\bs Z_i$ is a zero vector except for the entry corresponding to the cluster that observation $i$ is in. Unlike the judge instrument however, we assume that there is variation of the instrument within the cluster. Furthermore, we assume that the errors consist of a cluster common component and an idiosyncratic component, such that all clustering can be modelled with a FE.

Since there is variation of the instrument within clusters, a cluster FE does not take away all instrument variation. Nevertheless, by the assumptions on the errors, it adequately models the clustering. A FE CJIVE therefore can identify the parameter of interest.

A MD CJIVE on the other hand, does not work. This is because the instrument vector for each individual contains only one nonzero element corresponding to the entry of the cluster it is in. Assuming the observations are sorted on clusters, this structure makes $\bs P_{Z}$ a block diagonal matrix, with the blocks corresponding to the clusters. A MD CJIVE takes sets all these blocks to zero, such that $\dddot{\bs P}_{Z}$ is a zero matrix, leading to no identification of the parameter of interest.

\section{Clustering on the judge}\label{sec:clustering on judge}
Before applying the MD CJIVE and FE CJIVE to simulated and real data in the next sections, we want to add a note on clustering on a particular dimension: the judge level. We observe that many judge designs cluster their standard errors at the judge level, to allow for covariance in the errors of cases handled by the same judge. For illustration, out of the 65 judge designs we reviewed 30 clustered the standard errors in the main specification on the judge level. Another 8 papers did so in one of the robustness checks.

From \cref{eq:bias decompose md} we know that dependencies in the errors can bias the estimates. For clustering on the judge level this bias is especially severe, as per definition the cases with possible covariance due to this clustering dimension are handled by the same judge, and therefore all covariances are positively weighted.

Controlling for clustering at the judge level via the MD CJIVE or by including a FE for it is often not possible, however. The MD CJIVE would namely remove from $\bs P_Z$ all elements corresponding to cases handled by the same judge. In case there are no exogenous covariates these are the only non-zero elements of $\bs P_Z$, hence $\dddot{\bs P}_{Z}$ is a zero matrix making $\hat{\beta}_{MDCJ}$ undefined. Similarly, adding a FE for the judges and using the FE CJIVE is not possible, as the FE would be multicollinear with the instruments, such that $\hat{\beta}_{FECJ}$ is also undefined.

\section{Monte-Carlo}\label{sec:MC}
In this section we study the finite sample performance of the MD CJIVE and FE CJIVE in a Monte-Carlo simulation. We generate data from \cref{eq:judge IV} with the data clustered on two dimensions. We first group $n=500$ observations in $k=30$ groups and assign each group a judge. Next, we further group the observations in two times $G^{(1)}=G^{(2)}=30$ groups for the clustering in the two dimensions. The groups are constructed in the same way as in \citet{ligtenberg2023inference}. We first determine the group sizes and next randomly divide the observations over the clusters. In each clustering dimension $c=1,2$ we first set $n_g^{(c)}=\max\{1, n\exp(\gamma^{(c)} g/G^{(c)})/[\sum_{g=1}^{G^{(c)}-1}\exp(\gamma^{(c)} g/G^{(c)})+1]\}$ for $g=1,\dots G^{(c)}-1$ and $n^{(c)}_{G^{(c)}}=\max\{1,n-\sum_{g=1}^{G^{(c)}-1}n^{(c)}_g\}$. Next, all cluster sizes are ensured to be integer by rounding them down. Finally, to get the correct sample size, the first $n-\sum_{g=1}^{G^{(c)}}n^{(c)}_g$ clusters are increased by one. The unbalancedness of the clusters are determined by $\gamma^{(1)}=\gamma^{(2)}=2$, such that the smallest and largest clusters are 5 and 38. The groups for the judges are created in the same fashion with the same unbalancedness parameter.

Next, we generate the first stage errors $\eta_i=w_{1}\eta_{1,i}+w_{2}\eta_{2}+(1-w_{1}-w_{2})\eta_{3,i}$, where $\eta_{1,i}$ and $\eta_{2,i}$ are clustered components of the errors, one for each dimension, and $\eta_{3,i}$ is an idiosyncratic component. These components are weighted with $w_{1}=w_{2}=1/3$. We want to capture both the case in which FE can model all clustering and the more general case in which they cannot. Therefore we generate $\eta_{1,i}$ and $\eta_{2,i}$ from the factor model discussed, among others, in \citet{mackinnon2023cluster}. That is, for $c=1,2$ and $g=1,\dots,G^{(c)}$, $\bs\eta_{c,[g]^{(c)}}=(\sqrt{1-(\omega^{(c)})^2} u^{(c)}_{g}+\omega^{(c)}\bs e^{(c)}_{[g]^{(c)}})f_{g}$. In case $\omega^{(c)}=0$ all clustering can be captured by cluster FE. We draw $u^{(c)}\sim\N(0,1)$ and $f^{(c)}\sim\N(0,9)$. Since we showed in \cref{sec:FE for general} that the covariances of the errors must vary systematically with the elements of the projection matrix to generate a bias, we set $\bs e^{(c)}_{[g]^{(c)}}\sim\N(\bs 0,\bs\Sigma^{(c)}_g)$, with $\bs\Sigma^{(c)}_g=\bs D_{A_g^{(c)}}^{-1/2}\bs A_g^{(c)}\bs D_{A_g^{(c)}}^{-1/2}$, where $\bs A_g^{(c)}=(\bs P_{Z,[g,g]^{(c)}}+\bs I_{n^{(c)}_g}0.01)$. $\bs\Sigma^{(c)}_{g}$ is therefore a correlation matrix constructed from $\bs P_{Z,[g,g]^{(c)}}$ plus a constant that is added to ensure invertibility. The second stage errors are constructed from the first as $\varepsilon_{i}=\rho\eta_{i}+\sqrt{1-\rho^2}v_{i}$ for $\rho=0.5$ and $v_{i}\sim\N(0,1)$.

Finally, we set $\beta=0$ and $\bs\Pi$ equal to a draw from $\N(\bs 0,\bs I_{k})$, which we keep constant over the simulations.

\Cref{fig:b00} shows the boxplots for different estimators when estimated for $10\,000$ data sets from above DGP with $\omega^{(1)}=\omega^{(2)}=0$, such the clustering in both clustering dimensions can be fully captured by FE. In particular, we show 2SLS; JIVE; CJIVE that clusters only on the first clustering dimension; FE JIVE that includes FE for both clustering dimensions; FE CJIVE that models the first clustering dimension with FE and the second dimension with more general clustering; and the MD CJIVE that clusters on both dimensions. In case we were unable to calculate one of the estimators due to for example a singular matrix, we exclude that estimate from the analysis.

As expected, we observe a bias for the three estimators that do not model all clustering dimensions. For 2SLS the bias is the largest and it decreases for JIVE and CJIVE when more and more terms terms causing a bias are removed. For CJIVE the bias is even close to zero.

In this DGP all clustering can be captured by FE, we therefore see that the FE JIVE and FE CJIVE perform well, as they are unbiased. The MD CJIVE on the other hand seems to yield estimates that are on average and in median too low. This relatively poor behaviour of the MD CJIVE may be attributable to the small sample size.

From this figure it also becomes clear that in this DGP it is more efficient to model the clustering dimensions by FE than by leaving them completely free, as the FE JIVE is less dispersed than the FE CJIVE , which in turn is less dispersed than the MD CJIVE.

Next, we turn to \cref{fig:b01} which again shows the boxplot for the same estimators, but now for a DGP in which $\omega^{(1)}=0$ and $\omega^{(2)}=1$ such that only the clustering in the first dimension can be captured by FE. We see that again 2SLS, JIVE and CJIVE are biased. CJIVE is also more biased than in the first figure. More importantly, since the FE JIVE can no longer correctly model all clustering, we observe that also this estimator is biased. Also note that the bias of the MD CJIVE shrunk compared to the previous DGP.

Finally, in \cref{fig:b11} we show the same results for a DGP in which $\omega^{(1)}=\omega^{(2)}=1$ such that neither clustering dimension can be modelled by FE. In this case the bias of the MD CJIVE is close to zero, showing its robustness to general clustering. All other estimators are biased, however. Note in particular that the bias of the FE JIVE and FE CJIVE is similar in magnitude or bigger than the bias of the CJIVE estimator, which ignores one clustering dimension.

\begin{figure}[t]
    \centering
    \caption{Boxplot of different estimators for clustered data.}\label{fig:b00}
    \includegraphics{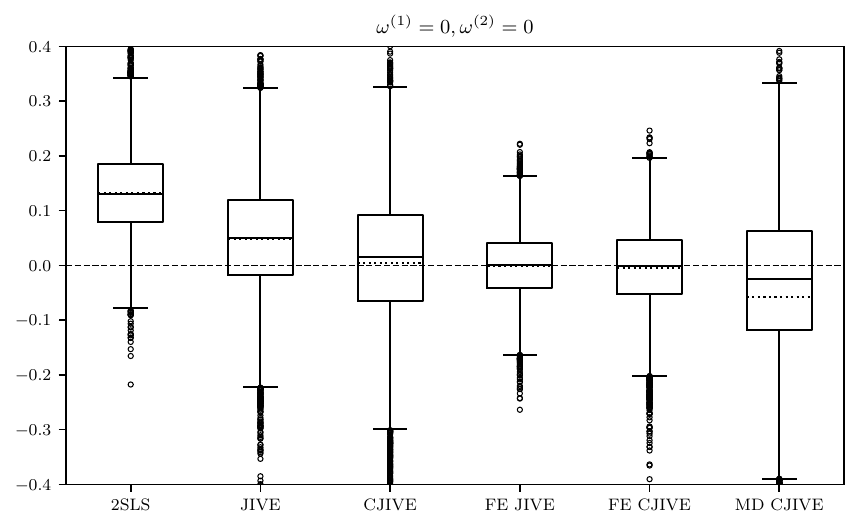}
    \begin{minipage}{0.95\textwidth}
    \footnotesize
    \textit{Note:} This figure gives a boxplot of 2SLS, JIVE, FE JIVE, FE CJIVE and MD CJIVE estimates for data clustered in two dimensions. The median is indicated by a solid line and the mean by a dotted line. $\omega^{(1)}$ and $\omega^{(2)}$ weigh in each clustering dimension the part of the errors that cannot be modelled by FE, such that both dimensions can be fully captured by FE. The DGP is described in \cref{sec:MC}.
    \end{minipage}
\end{figure}

\begin{figure}[t]
    \centering
    \caption{Boxplot of different estimators for clustered data.}\label{fig:b01}
    \includegraphics{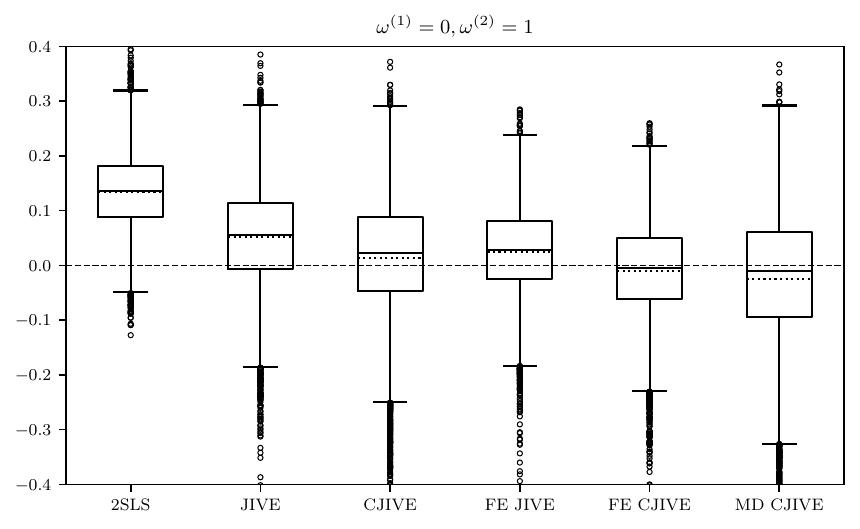}
    \begin{minipage}{0.95\textwidth}
    \footnotesize
    \textit{Note:} This figure gives a boxplot of 2SLS, JIVE, FE JIVE, FE CJIVE and MD CJIVE estimates for data clustered in two dimensions. The median is indicated by a solid line and the mean by a dotted line. $\omega^{(1)}$ and $\omega^{(2)}$ weigh in each clustering dimension the part of the errors that cannot be modelled by FE, such that only the first can be fully captured by FE. The FE CJIVE correctly models the second dimension with general clustering. The DGP is described in \cref{sec:MC}.
    \end{minipage}
\end{figure}

\begin{figure}[t]
    \centering
    \caption{Boxplot of different estimators for clustered data.}\label{fig:b11}
    \includegraphics{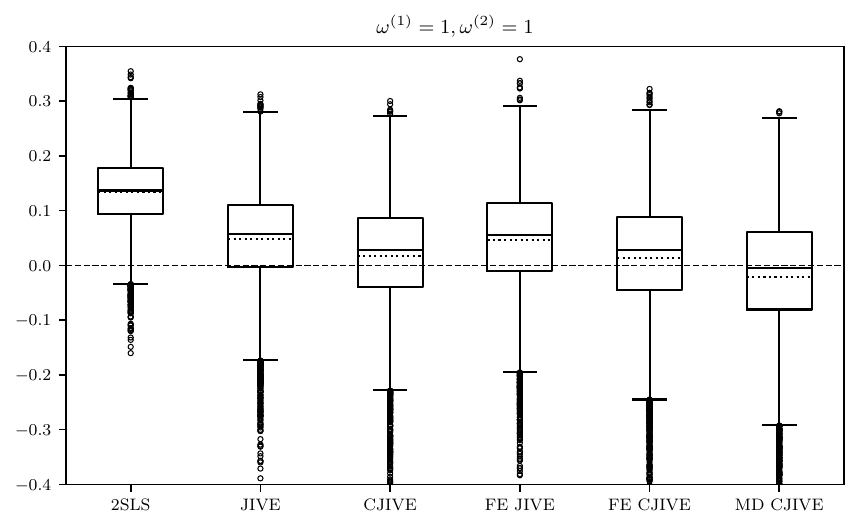}
    \begin{minipage}{0.95\textwidth}
    \footnotesize
    \textit{Note:} This figure gives a boxplot of 2SLS, JIVE, FE JIVE, FE CJIVE and MD CJIVE estimates for data clustered in two dimensions. The median is indicated by a solid line and the mean by a dotted line. $\omega^{(1)}$ and $\omega^{(2)}$ weigh in each clustering dimension the part of the errors that cannot be modelled by FE, such that neither clustering dimension can be fully captured by FE. The FE CJIVE models the second dimension with general clustering. The DGP is described in \cref{sec:MC}.
    \end{minipage}
\end{figure}

\section{Empirical illustrations}\label{sec:applications}
In this section we revisit two judge designs and show how the original results change when we account for clustering in multiple dimensions.

\subsection{\texorpdfstring{\citet{di2013criminal}}{Di Tella and Schargrodsky (2013)}}\label{ssec:DTS}
The first study that we re-investigate is the one by \citet{di2013criminal}. The authors estimate the effect of assigning a criminal to electronic monitoring (EM) as alternative to incarceration, on recidivism. For this they use that in the province of Buenos Aires, Argentina, alleged offenders are randomly assigned to judges. In addition, the judges' assignment rates of EM differ widely, which makes that a judge design can be used to estimate the effect of EM on recidivism.

To estimate this effect, \citet{di2013criminal} use a leave-out mean conviction rate, $\bs L$, as instrument. The model is augmented with exogenous covariates. That is
\begin{equation}\label{eq:model leniency exogenous}
    \begin{split}
        \bs y&=\bs X\beta+\bs W\bs\Gamma_2+\bs\varepsilon\\
        \bs X&=\bs L\pi+\bs W\bs\Gamma_1+\bs\eta.
    \end{split}
\end{equation}
Here, $\bs y$ are dummy variables indicating recidivism, $\bs X$ are dummy variables indicating whether EM was assigned, $\bs L$ is the leave-out average EM assignment rate per judge, where individuals are left out and $\bs W$ are exogenous covariates. The exogenous covariates, of which there are 41 in total, include indicators for the most serious crime committed divided over 11 categories, year and judicial district fixed effects.

The model is estimated using data on 1503 cases of people that were released from the penal system between the first of January 1998 and the twenty-third of October 2007, in the Province of Buenos Aires, Argentina. In 386 of these cases EM was assigned to the defendant. The sample is constructed such that each of the 171 judges assigns EM at least one time and has handled 10 cases or more in a larger data set.

By using the a leave-out EM assignment rate, \citet{di2013criminal} implicitly cluster at the individual level and remove the many instrument bias stemming from this clustering dimension. The reported standard errors are furthermore clustered at judicial districts, but this is not taken into account for the coefficient estimates. Additional clustering dimensions may emerge when the FE are projected out.

In \cref{tab:DTS} we re-estimated $\beta$ accounting for the different clustering dimensions using the MD CJIVE and FE CJIVE, for which we adopt the model as in \cref{eq:judge IV exogenous}. For both estimators we started with the model that naively projects out the controls and clusters on the individual level as defined by the leave-out mean instrument in the original paper. Next, we keep adding clustering dimensions implied either by the standard errors or by the FE. The MD CJIVE takes care of the clustering by removing the terms from the estimator. The FE CJIVE adds FE for the clustering dimensions implied by the standard errors and correctly removes these. When the clustering dimensions implied by the FE are added to the FE CJIVE we also correctly remove these FE. Finally, for this estimator we give the estimates obtained by removing all control variables via the FE CJIVE.

We start with the estimates obtained taking the clustering into account via the MD CJIVE as given in the second column of \cref{tab:DTS}. By leaving out the terms corresponding to the judicial district and the most serious crime, the estimates are almost one and a half times larger in magnitude. If we also model the last clustering dimension implied by the FE by leave-out the estimate jumps back close to the original level, but stays slightly more negative.

The estimates from the FE CJIVE, as given in the third column of \cref{tab:DTS}, are unstable. The first estimate, which only clusters at the individual level, is the same as the one by the MD CJIVE, since it does not remove any control variables yet. Throughout we use the individual level as the level that is modelled via leave-out. Adding FE for the judicial district and correctly removing these makes the estimates more negative. Perhaps even unrealistically so, since it would imply that criminals on the margin have more than 60\% lower recidivism rates if they receive EM instead of a prison sentence.

Also correctly removing the FE for most serious crime makes the estimates even more negative. If we remove the year FE with the FE CJIVE on the other hand the estimates turn positive. Finally, in the last line we give the estimate of the FE CJIVE with all control variables. This estimate is negative and large in magnitude, but does not have a sensible interpretation, however, as it is far outside the zero and one range.

\begin{table}[ht]
    \centering
    \caption{Effect of electronic monitoring on recidivism.}
    \label{tab:DTS}
    \begin{tabular}{lrr}
\toprule
 Clusters                 &   MD CJIVE &   FE CJIVE \\
\midrule
 Individual               &    -0.0997 &    -0.0997 \\
 + Judicial district (SE) &    -0.1059 &    -0.6155 \\
 + Crime (FE)             &    -0.1434 &    -0.7748 \\
 + Year (FE)              &    -0.1057 &     0.5529 \\
 + All controls           &            &    -2.9520 \\
\bottomrule
\end{tabular}
\\[1ex]
\begin{minipage}{0.95\textwidth}
    \footnotesize
    \textit{Note:} This table shows estimates effect of electronic monitoring on recidivism using the data from \citet{di2013criminal}. The estimators account for different levels of clustering as indicated by the first column, where we specify whether the clustering is coming from how the standard errors are specified (SE) or from removing the fixed effects (FE). The FE CJIVE models the first clustering dimension by leave-out and the others by correctly removing the FE. The final row gives the estimate obtained by removing all control variables via the FE CJIVE.
\end{minipage}
\end{table}

\subsection{\texorpdfstring{\citet{agan2023misdemeanor}}{Agan et al. (2023)}}
The second study we revisit is the one by \citet{agan2023misdemeanor}. This study estimates the effect of \emph{not} prosecuting someone who committed a nonviolent misdemeanour offence on subsequent criminal behaviour. The authors use that in Suffolk County, Massachusetts, cases of nonviolent misdemeanours are randomly assigned to judges, conditional on the time, the day of the week and the location. The varying strictness of the judges can therefore be used to identify the effect of nonprosecution.

The model to estimate this effect is similar to \cref{eq:model leniency exogenous} from the previous section. Now $\bs y$ indicates whether the defendant faced a new criminal charge within two years after having committed the nonviolent misdemeanour, $\bs X$ is an indicator variable for non-prosecution and $\bs W$ capture individual characteristics and court and time FE. The instrument is constructed differently from before, however. Since the types of misdemeanours vary over time, the day of the week and the location, \citet{agan2023misdemeanor} residualise the prosecution decisions with respect to FE for these variables before taking a leave-out average. $\bs L$ is therefore a residualised leave-out average prosecution rate per judge, where individuals are left out.

The model is estimated using data on misdemeanour criminal complaints in Suffolk County, Massachusetts, between the first of January, 2004, and the first of January, 2020. In total this data set contains 67\,060 cases involving 53\,358 individuals. The cases are handled by 315 judges. In addition to the prosecution decision, whether the defendant faced new charges within two years and the ADA identity, the data set also contains characteristics of the defendant. \citet{agan2023misdemeanor} use 16 of these as controls. Together with an intercept and the court-day-of-week and court-month fixed effects, this brings the dimension of $\bs W_{i}$, after removing redundant controls, to 1276.  

In their main analysis \citet{agan2023misdemeanor} cluster the standard errors at the individual and judge level. The many instrument bias due to clustering on the individual level is correctly removed by using the leave-out average prosecution rate. The many instrument bias from clustering on the judge level, on the other hand, is not addressed and can also not be taken care of using the methods developed in this paper. Other relevant clustering may be introduced by projecting out the court-time FE.

In \cref{tab:ADH} we show again estimates accounting for different clustering dimensions similar to the previous subsection. Note that, we adopt the model as in \cref{eq:judge IV exogenous} and hence in the baseline model we project out the exogenous control variables differently than in the original study. Contrary to the results in the previous subsection, we see that the MD CJIVE estimates become less negative when taking more clustering dimensions into consideration, which implies a weaker effect of nonprosecution on recidivism. In particular the final estimate is much smaller magnitude than the previous two. This may be due to the court-month clusters being relatively large, which means that many terms need to be removed from the estimator, which can change its value also relatively much.

The FE CJIVE shows again less stable behaviour than the MD CJIVE. The first two and final estimates are all close to each other, but the third estimate deviates. Also, as opposed to the estimates from the MD CJIVE, we observe accounting for more clustering makes the FE CJIVE more negative.

\begin{table}[ht]
    \centering
    \caption{Effect of nonprosecution on recidivism.}
    \label{tab:ADH}
    \begin{tabular}{lrr}
\toprule
 Clusters                 &   MD CJIVE &   FE CJIVE \\
\midrule
 Individual               &    -0.1996 &    -0.1996 \\
 + Court-day of week (FE) &    -0.1805 &    -0.1993 \\
 + Court-month (FE)       &    -0.1274 &     0.0242 \\
 + All controls           &            &    -0.2004 \\
\bottomrule
\end{tabular}
\\[1ex]
\begin{minipage}{0.95\textwidth}
    \footnotesize
    \textit{Note:} This table shows estimates effect of nonprosecution on recidivism using the data from \citet{agan2023misdemeanor}. The estimators account for different levels of clustering as indicated by the first column, where we indicate when the clustering stems from removing the fixed effects (FE). The FE CJIVE models the first clustering dimension by leave-out and the others by correctly removing the FE. The final row gives the estimate obtained by removing all control variables via the FE CJIVE.
\end{minipage}
\end{table}

\section{Concluding remarks}\label{sec:conclusion}
In this paper we argued that most judge designs imply that their data are clustered in multiple dimensions due to how they specify their estimator, their standard errors and control variables. Nevertheless, only a single clustering dimension is taken into account by the 2SLS estimator with a leave-out instrument. Consequently and as we show in Monte-Carlo simulations, a many instrument bias may persist.

Conventional estimators cannot handle multidimensional clustered data. We therefore proposed the MD CJIVE and FE CJIVE that either remove from the conventional estimators the terms associated with the bias or eliminate the clustering by including fixed effects. In Monte-Carlo simulations we found that these estimators are unbiased as long as the modelling assumption underlying them are satisfied. Furthermore, we show that modelling the clustering with FE might work well even if the real data clustering is more complex. However, these two new estimators are in general not a solution to a bias that may emerge due to clustering on the judge level, which is an especially popular clustering level. 

Next, we showed the empirical relevance of clustering in judge designs by revisiting the studies from \citet{di2013criminal} and \citet{agan2023misdemeanor}. We found that taking the clustering into account via the MD CJIVE usually yields more stable results than via the FE CJIVE and that these estimates can differ from those that ignore the clustering.

Finally, for future research we note that the adaptations we made to the projection matrix in the 2SLS estimator to obtain the MD CJIVE and FE CJIVE might also work in other methods for IV models. For example, our techniques can potentially generalise the HLIM and HFUL estimators \citep{hausman2012instrumental} or the many instrument and identification robust tests \citep{crudu2021inference,mikusheva2021inference,ligtenberg2023inference,matsushita2020jackknife} to multidimensional clustered data.

\bibliographystyle{chicago}
\bibliography{Bib}

\newpage
\begin{appendices}
\crefalias{section}{appsec}
\crefalias{subsection}{appsubsec}
\numberwithin{equation}{section}
\numberwithin{counter}{section}

\section{Overview of judge designs}\label{app:overview}
In this section we review judge designs to show the relevance of data clustering. To prevent selection bias, we started with the 70 papers listed in the overview Table A1 in \citet{chyn2024examiner}. We were unable to find two papers on the list, two other papers were merged into one and we excluded a final two papers because they used simulated data or were not a judge design, which brings the total number of papers to 65.

The papers are listed in \cref{tab:overview} and for each of them we document a couple of characteristics. First of all, we note which estimator each paper used. It strikes that the vast majority, 55 to be precise, used 2SLS with a leave-out instrument or a variant thereof. Another three papers used the judge dummies as instrument and estimated the effect of a conviction with JIVE.

A leave-out instrument or JIVE implicitly allows for clustering in a single dimension. The second characteristic we document is therefore this clustering dimension. For exposition purposes we discern four clustering categories labelled case, individual, judge and other. Case refers to a single observation, individual to one entity that can be involved in multiple cases, judge to the decision maker and other to all other possible dimensions.

Next, we document the dimensions on which the standard errors were clustered in the main specification and in the robustness checks. In case the papers used two-way clustered standard errors, we list both dimensions. From the table we see that 49 papers that used a form of leave-out or jackknife controlled in their main specification for clustering in another dimension than that they left out in estimation. An additional two papers allowed for clustering in other dimensions than what is left out in the robustness checks. This shows that researchers are concerned with dependence in multiple dimensions, but they do not properly control for this in estimation.

That clustering in multiple dimensions is empirically relevant, also becomes clear by the eleven papers that clustered their standard errors in two dimensions in the main specification. Moreover, there are an additional seven papers that clustered their standard errors in a single dimension, but in which the authors suspected additional dependence, hence they clustered on additional dimensions in robustness checks.

From the overview it also becomes clear that many papers clustered on the level of the judge. 30 papers did so in the main specification and eight in the robustness checks.

Finally, we document for each paper whether they included FE as control variables. Without exception they all do, which can induce additional dimensions of dependence depending on the number of FE relative to the sample size.

\begin{landscape}
    \begin{ThreePartTable}
\begin{TableNotes}
\footnotesize
\item[a]\label{tn:lo variant} This paper uses a variant of leave-out.
\item[b]\label{tn:ao lo} Amongst others the same clustering dimension as what is left out in the estimator.
\item[c]\label{tn:cl lo} Same clustering dimension as what is left out in the estimator.
\item[d]\label{tn:md cl} Includes clustering in multiple dimensions.
\note This table gives an overview of the estimators and clustering dimensions in the judge designs listed in \citet{chyn2024examiner}. In the column ``Estimator'' we list the estimator used, where JIVE refers to JIVE on judge dummies; 2SLS refers to 2SLS on judge dummies; leave-out refers to 2SLS with a leave-out instrument; SP jackknife refers to a semi-parametric estimation method that uses jackknife; mean refers to 2SLS with a mean conviction rate; and TS 2SLS refers to two sample 2SLS. The column ``Left-out'' specifies what is left out when JIVE or a leave-out instrument is used. The columns under ``Standard error cluster'' list the first and second clustering dimensions of the standard errors and the clustering dimensions of the standard errors in the robustness checks. For the clustering dimensions we use the terminology of a judicial judge design for classification. The column ``FE'' specifies whether the model included fixed effects. \\
\end{TableNotes}

\begin{longtable}{lllllll}
\caption{Overview of papers using a judge design.}\label{tab:overview}\\
\toprule
Paper & Estimator & Left-out &\multicolumn{3}{l}{Standard error cluster} & FE \\
\cmidrule{4-6}
&&& First & Second & Robustness\\
\midrule\endfirsthead
\toprule
Paper & Estimator & Left-out &\multicolumn{3}{l}{Standard error cluster} & FE \\
\cmidrule{4-6}
&&& First & Second & Robustness\\
\midrule\endhead
\bottomrule
\multicolumn{6}{l}{Table continues on the next page.}\\
\endfoot
\bottomrule
\insertTableNotes
\endlastfoot
\citet{kling2006incarceration}	&	JIVE	&	Case	&	Other	&		&		&	\checkmark\\
\citet{green2010using}	&	2SLS	&		&	Other	&		&		&	\checkmark\\
\citet{loeffler2013does}	&	2SLS	&		&		&		&		&	\checkmark\\
\citet{di2013criminal}	&	Leave-out	&	Individual	&	Other	&		&	Judge	&	\checkmark\\
\citet{aizer2015juvenile}	&	Leave-out	&	Case	&	Other	&		&	Judge	&	\checkmark\\
\citet{mueller2015criminal}	&	SP jackknife	&	Individual	&	Individual	&		&	Other	&	\checkmark\\
\citet{gupta2016heavy}	&	Leave-out	&	Individual	&	Other	&		&	Individual	&	\checkmark\\
\citet{leslie2017unintended}	&	Leave-out	&	Individual	&	Individual	&	Other	&	Judge, other	&	\checkmark\\
\citet{bhuller2018intergenerational}	&	Leave-out	&	Case	&	Judge	&	Individual	&		&	\checkmark\\
\citet{arteaga2018cost}	&	Leave-out	&	Individual	&	Judge	&	Other	&		&	\checkmark\\
\citet{arnold2018racial}	&	Leave-out	&	Individual	&	Individual	&	Other	&	Ind., judge\tnotex{tn:md cl}	&	\checkmark\\
\citet{dobbie2018effects}	&	Leave-out	&	Individual	&	Individual	&	Judge	&		&	\checkmark\\
\citet{stevenson2018distortion}	&	JIVE	&	Case	&		&		&		&	\checkmark\\
\citet{ribeiro2019pretrial}	&	Leave-out	&	Individual	&	Individual	&	Judge	&		&	\checkmark\\
\citet{white2019misdemeanor}	&	Mean	&		&		&		&		&	\checkmark\\
\citet{bhuller2020incarceration}	&	Leave-out	&	Case	&	Individual	&		&		&	\checkmark\\
\citet{zapryanova2020effects}	&	Leave-out	&	Individual	&		&		&		&	\checkmark\\
\citet{didwania2020immediate}	&	Leave-out	&	Individual	&	Other	&		&		&	\checkmark\\
\citet{norris2021effects}	&	Leave-out	&	Case	&	Individual	&	Other	&	Judge	&	\checkmark\\
\citet{agan2023misdemeanor}	&	Leave-out	&	Individual	&	Individual	&	Judge	&		&	\checkmark\\
\citet{arbour2021can}	&	Leave-out	&	Individual	&	Other	&		&	Other	&	\checkmark\\
\citet{bhuller2021prison}	&	Leave-out	&	Individual	&	Judge	&	Individual	&		&	\checkmark\\
\citet{eren2021juvenile}	&	Leave-out	&	Individual	&	Judge	&		&	Other	&	\checkmark\\
\citet{grau2023effect}	&	Leave-out	&	Individual	&	Judge	&		&		&	\checkmark\\
\citet{augustine2022impact} 	&	Leave-out	&	Case	&	Judge	&		&		&	\checkmark\\
\citet{alexeev2022fines}	&	Leave-out	&	Individual	&	Individual	&	Judge	&	Other	&	\checkmark\\
\citet{gonccalves2023police}	&	Leave-out	&	Case	&	Judge	&		&		&	\checkmark\\
\citet{humphries2023conviction}	&	Leave-out	&	Case	&	Judge	&		&		&	\checkmark\\
\citet{chang2006effect}	&	2SLS	&		&		&		&		&	\checkmark\\
\citet{dobbie2015debt}	&	Leave-out	&	Individual	&	Other	&		&		&	\checkmark\\
\citet{dobbie2017consumer}	&	Leave-out	&	Individual	&	Other	&		&		&	\checkmark\\
\citet{rieber2019herding}	&	JIVE	&	Individual	&		&		&	Individual	&	\checkmark\\
\citet{dippel2020property}	&	2SLS	&		&		&		&		&	\checkmark\\
\citet{honigsberg2021deleting}	&	Leave-out\tnotex{tn:lo variant}	&	Case	&	Other	&		&		&	\checkmark\\
\citet{cheng2021consumers}	&	Leave-out	&	Individual	&	Judge	&		&		&	\checkmark\\
\citet{cespedes2021effect}	&	Leave-out	&	Individual	&	Other	&		&		&	\checkmark\\
\citet{doyle2017evaluating}	&	Leave-out	&	Individual	&	Other	&		&	Judge	&	\checkmark\\
\citet{bakx2020better}	&	Leave-out	&	Individual	&		&		&		&	\checkmark\\
\citet{blaehr2021instrumental}	&	Leave-out	&	Individual	&		&		&		&	\checkmark\\
\citet{hertzberg2021overdiagnosing}	&	Leave-out	&	Individual	&	Judge	&		&		&	\checkmark\\
\citet{mullainathan2022diagnosing}	&	Leave-out	&	Individual	&		&		&		&	\checkmark\\
\citet{chan2022selection}	&	Leave-out	&	Individual	&	Judge	&		&		&	\checkmark\\
\citet{eichmeyer2022pathways}	&	Leave-out	&	Individual	&	Judge	&		&		&	\checkmark\\
\citet{galasso2015patents}	&	Leave-out\tnotex{tn:lo variant}	&	Case	&		&		&	Ind., other	&	\checkmark\\
\citet{galasso2018patent} 	&	Leave-out\tnotex{tn:lo variant}	&	Case	&	Individual	&		&		&	\checkmark\\
\citet{sampat2019patents}	&	TS 2SLS	&		&	Other	&		&	Individual	&	\checkmark\\
\citet{farre2020patent}	&	Leave-out\tnotex{tn:lo variant}	&	Case	&	Judge	&		&		&	\checkmark\\
\citet{doyle2007child}	&	Leave-out	&	Other	&	Judge	&		&		&	\checkmark\\
\citet{doyle2008child}	&	Leave-out	&	Other	&	Judge	&		&		&	\checkmark\\
\citet{maestas2013does}	&	Leave-out	&	Individual	&	Judge	&		&		&	\checkmark\\
\citet{doyle2013causal}	&	Leave-out	&	Other	&	Judge	&		&	Other\tnotex{tn:ao lo}	&	\checkmark\\
\citet{dahl2014family}	&	Leave-out	&	Case	&	Judge	&		&		&	\checkmark\\
\citet{french2014effect}	&	Leave-out	&	Case	&	Judge	&		&		&	\checkmark\\
\citet{hyman2018can}	&	Leave-out	&	Individual	&	Judge	&		&		&	\checkmark\\
\citet{collinson2024eviction}	&	Leave-out	&	Case	&	Other	&		&		&	\checkmark\\
\citet{hjalmarsson2019causal}	&	Leave-out	&	Individual	&	Judge	&		&	Other	&	\checkmark\\
\citet{autor2019disability}	&	Leave-out	&	Case	&	Judge	&		&		&	\checkmark\\
\citet{diamond2020effect}	&	Leave-out	&	Other	&	Other\tnotex{tn:cl lo}	&		&		&	\checkmark\\
\citet{black2024effect}	&	Leave-out	&	Case	&	Judge	&		&		&	\checkmark\\
\citet{gross2022temporary}	&	Leave-out	&	Case	&	Individual	&		&	Judge, ind., other\tnotex{tn:md cl}	&	\checkmark\\
\citet{bald2022causal}	&	Leave-out	&	Other	&	Judge	&	Other\tnotex{tn:cl lo}	&		&	\checkmark\\
\citet{baron2022there}	&	Leave-out	&	Case	&	Individual	&		&	Judge, ind., other\tnotex{tn:md cl}	&	\checkmark\\
\citet{silver2022impacts}	&	Leave-out	&	Individual	&	Other	&		&		&	\checkmark\\
\pagebreak
\citet{cohen2024housing}	&	Leave-out	&	Case	&	Judge	&		&		&	\checkmark\\
\citet{lee2023halfway}	&	Leave-out	&	Case	&	Judge	&		&		&	\checkmark\\
\end{longtable}
\end{ThreePartTable}
\end{landscape}

\section{Equivalence 2SLS and CJIVE}\label{app:equivalence}
To show equivalence between the 2SLS with a leave-out instrument and the CJIVE with the judge identity dummies, we consider the model from \cref{eq:judge IV} with clustering on a single dimension. Let $L_i$ be the number of convictions of the judge $J(i)$  excluding all cases involving the defendant of case $i$. Also let $\tilde{\bs D}_n$ be the $n\times n$ diagonal matrix, with $i\th$ diagonal element the number of cases that the judge of case $i$ handles, excluding any cases that involve the defendant of case $i$. The leave-out mean leniency measure can then be written as $\tilde{\bs D}_n^{-1}\bs L$ and the 2SLS estimator using this instrument is $\hat{\beta}_{2SLS}=(\bs X'\tilde{\bs D}_n\bs L)^{-1}\bs L'\tilde{\bs D}_{n}\bs y$.

To see that this is similar to a cluster JIVE with clustering on the individual, note that $[\bs Z'\bs Z]_{ij}=\sum_{l=1}^nZ_{li}Z_{lj}$. Each term can only be nonzero when $i=j$ because there is only one judge per case. $Z_{li}$ equals one for every case $j$ handles. Hence, $\bs Z'\bs Z$ is a $k\times k$ diagonal matrix with diagonal element $i$ equal to $n_{i}$, the number of cases assigned to judge number $i$. To emphasise that it is a diagonal matrix we write it as $\bs D_k$. Similarly, we define $\bs D_n$ as the $n\times n$ diagonal matrix with diagonal element $i$ equal to $n_{J(i)}$, the number of cases that the judge assigned to case $i$ handles.

Also, $[\bs Z\bs Z']_{ij}=\sum_{l=1}^kZ_{il}Z_{jl}$, which equals one when the defendants in cases $i$ and $j$ are handled by the same judge. Therefore $\bs Z\bs Z'$ is a matrix with in row $i$, ones in the columns corresponding to the cases also handled by the judge assigned to case $i$ and zeroes elsewhere. Combining the two we find, $\bs P_{Z}=\bs Z(\bs Z'\bs Z)^{-1}\bs Z'=\bs Z\bs D_k^{-1}\bs Z'=\bs D_n^{-1}\bs Z\bs Z'$ with 
\begin{equation}\label{eq:Pz}
    \begin{split}
        P_{Z,ij}&=\begin{cases}
            1/n_{J(i)} &\text{ if } J(i)=j\\
            0 &\text{ otherwise}.
        \end{cases}
    \end{split}
\end{equation}
Hence, $\bs P_{Z}\bs X$ has as $i\th$ element the average number of convictions of $J(i)$.

Remember that we cluster on the individual level, such that $\bs B_{ZZ'}$ equals a matrix with in row $i$ ones in the columns corresponding to the cases involving the defendant of case $i$ that were assigned to $J(i)$ and zeroes elsewhere. Consequently, the $i\th$ element of $\bs B_{Z Z'}\bs X$ is the number of times $J(i)$ convicted defendant $i$. This gives $\bs B_{P_Z}\bs X=\bs B_{Z(Z'Z)^{-1}Z'}\bs X=\bs B_{D_n^{-1}ZZ'}\bs X=\bs D_n^{-1}\bs B_{ZZ'}\bs X$, with element $i$ the number of times $J(i)$ convicted the defendant $i$ divided by $n_{J(i)}$.

Therefore $\bs Z\bs Z'\bs X-\bs B_{ZZ'}\bs X=\bs L$ is the leave-out number of convictions. Next, note that  $\ddot{\bs P}_{Z}\bs X=\bs P_{Z}\bs X-\bs B_{P_{Z}}\bs X=\bs D_n^{-1}[\bs Z\bs Z'\bs X-\bs B_{ZZ'}\bs X]=\bs D_n^{-1}\bs L$. Such that we can write the cluster JIVE estimator as $\hat{\beta}_{CJIVE}=(\bs X'\ddot{\bs P}_{Z}\bs X)^{-1}\bs X'\ddot{\bs P}_{Z}\bs y=(\bs X'\bs D_n^{-1}[\bs Z\bs Z'\bs X-\bs B_{ZZ'}\bs X])^{-1}[\bs X'\bs Z\bs Z'-\bs X'\bs B_{ZZ'}]\bs D_n^{-1}\bs y=(\bs X'\bs D_n^{-1}\bs L)^{-1}\bs L'\bs D_n^{-1}\bs y$. This can be interpreted as a weighted version of the 2SLS estimator using the leave-out instrument.

\section{Consistency}
\subsection{MD CJIVE}\label{app:consistency MD CJIVE}
In this section we use the following additional notation. For $\bs A$ a square matrix let $\lambdamin(\bs A)$ and $\lambdamax(\bs A)$ be the minimum and maximum eigenvalues of $\bs A$. For $\bs A$ not necessarily square $\|\bs A\|_2=\sqrt{\lambdamax(\bs A'\bs A)}$ is the spectral norm. $\toprob$ denotes convergence in probability. $\bar{C}$ denotes a finite positive scalar that independent from the sample size that can differ between appearances.

Consider the model as in \cref{eq:judge IV}. We obtain consistency of the MD CJIVE estimator via high level assumptions similar to those in \citet{frandsen2023cluster}.

\begin{assumption}\label{ass:consistency MD CJIVE}
Conditional on $\bs Z$, $\bs X'\dddot{\bs P}_{Z}\bs X/n\toprob\bs Q_{XPX}$ a positive definite matrix and $\bs X'\dddot{\bs P}_{Z}\bs\varepsilon/n\toprob\lim_{n\to\infty}\E(\bs X'\dddot{\bs P}_{Z}\bs\varepsilon/n|\bs Z)$.
\end{assumption}

This assumption ensures that the limit of the inverse in the MD CJIVE exists and that a law of large numbers works on the part after the inverse.

\begin{theorem}
Under \cref{ass:consistency MD CJIVE} it holds that $\hat{\bs\beta}_{MDCJ}-\bs\beta\toprob0$.
\end{theorem}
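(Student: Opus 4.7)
The plan is to apply the standard decomposition
\begin{equation*}
\hat{\bs\beta}_{MDCJ}-\bs\beta=\bigl(\bs X'\dddot{\bs P}_{Z}\bs X/n\bigr)^{-1}\bigl(\bs X'\dddot{\bs P}_{Z}\bs\varepsilon/n\bigr),
\end{equation*}
and invoke \cref{ass:consistency MD CJIVE} together with the continuous mapping theorem and Slutsky's lemma. The denominator converges in probability to the positive definite matrix $\bs Q_{XPX}$, so it suffices to show that the probability limit of the numerator equals zero. By \cref{ass:consistency MD CJIVE} this limit equals $\lim_{n\to\infty}\E(\bs X'\dddot{\bs P}_{Z}\bs\varepsilon/n|\bs Z)$, so the entire task reduces to verifying that $\E(\bs X'\dddot{\bs P}_{Z}\bs\varepsilon|\bs Z)=o(n)$, and in fact I will argue it is identically zero.

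I would split the numerator using the reduced form $\bs X=\bs Z\bs\Pi+\bs\eta$. The term $\bs\Pi'\bs Z'\dddot{\bs P}_{Z}\E(\bs\varepsilon|\bs Z)$ vanishes by Assumption~1(i). The key step is therefore to prove $\E(\bs\eta'\dddot{\bs P}_{Z}\bs\varepsilon|\bs Z)=0$, which is what the construction of $\dddot{\bs P}_Z$ is designed to deliver. Writing
\begin{equation*}
\E(\bs\eta'\dddot{\bs P}_{Z}\bs\varepsilon|\bs Z)=\sum_{i=1}^n\sum_{j=1}^n\dddot{P}_{Z,ij}\,\E(\eta_i\varepsilon_j|\bs Z),
\end{equation*}
I would split the pairs $(i,j)$ according to whether $j\in[G(i)]$. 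Assumption~3 gives $\E(\eta_i\varepsilon_j|\bs Z)=\E(\eta_i|\bs Z)\E(\varepsilon_j|\bs Z)=0$ whenever $j\notin[G(i)]$ (independence plus mean zero), so those pairs contribute nothing regardless of the weights.

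The main obstacle — and the combinatorial heart of the argument — is to show that $\dddot{P}_{Z,ij}=0$ for every pair with $j\in[G(i)]=[G^{(1)}(i)]^{(1)}\cup[G^{(2)}(i)]^{(2)}$. This is precisely an inclusion–exclusion verification on the Hadamard masks defining $\bs B^{(1)}_{P_Z}$, $\bs B^{(2)}_{P_Z}$ and $\bs B^{(1,2)}_{P_Z}$. I would treat three cases. If $j\in[G^{(1)}(i)]^{(1)}\setminus[G^{(2)}(i)]^{(2)}$, then $S^{(1)}_{ij}=1$, $S^{(2)}_{ij}=0$ and $S^{(1,2)}_{ij}=0$, so the weight collapses to $P_{Z,ij}-P_{Z,ij}-0+0=0$; symmetrically for $j\in[G^{(2)}(i)]^{(2)}\setminus[G^{(1)}(i)]^{(1)}$; and if $j$ lies in the intersection (in particular for $j=i$) one gets $P_{Z,ij}-P_{Z,ij}-P_{Z,ij}+P_{Z,ij}=0$. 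Hence every surviving term in the double sum satisfies $j\notin[G(i)]$ and is killed by Assumption~3.

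Putting it together, $\E(\bs X'\dddot{\bs P}_{Z}\bs\varepsilon|\bs Z)=0$ exactly, so $\bs X'\dddot{\bs P}_{Z}\bs\varepsilon/n\toprob 0$ by \cref{ass:consistency MD CJIVE}. Combining with the invertibility of the probability limit of the Gram matrix and applying the continuous mapping theorem yields
\begin{equation*}
\hat{\bs\beta}_{MDCJ}-\bs\beta\toprob\bs Q_{XPX}^{-1}\cdot\bs 0=\bs 0,
\end{equation*}
completing the proof. The entire argument is elementary once the inclusion–exclusion structure of $\dddot{\bs P}_Z$ is made explicit; I would anticipate that all analytical difficulty is hidden inside the high-level \cref{ass:consistency MD CJIVE}, which is why this result mirrors the structure of \citet{frandsen2023cluster} for the one-dimensional case.
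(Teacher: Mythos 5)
Your proposal is correct and follows essentially the same route as the paper: decompose via Slutsky's lemma, use \cref{ass:consistency MD CJIVE} for the Gram matrix and numerator limits, and show $\E(\bs X'\dddot{\bs P}_{Z}\bs\varepsilon|\bs Z)=\E(\bs\Pi'\bs Z'\dddot{\bs P}_{Z}\bs\varepsilon+\bs\eta'\dddot{\bs P}_{Z}\bs\varepsilon|\bs Z)=0$. The only difference is that you spell out the inclusion--exclusion verification that $\dddot{P}_{Z,ij}=0$ whenever $j\in[G(i)]$, which the paper leaves implicit in the construction of $\dddot{\bs P}_{Z}$.
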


\begin{proof}
By Slutsky's lemma and $\E(\bs X'\dddot{\bs P}_{Z}\bs\varepsilon|\bs Z)=\E(\bs\Pi'\bs Z'\dddot{\bs P}_{Z}\bs\varepsilon+\bs\eta'\dddot{\bs P}_{Z}\bs\varepsilon|\bs Z)=0$, we have that $\hat{\bs\beta}_{MDCJ}-\bs\beta\toprob\bs Q_{XPX}^{-1}(\lim_{n\to\infty}\E(\bs X'\dddot{\bs P}_{Z}\bs\varepsilon/n|\bs Z))=0$.
\end{proof}

\subsection{FE CJIVE}\label{app:consistency FE CJIVE}
In this section we consider a slightly more general version of \cref{eq:judge IV exogenous}, where there are $p$ endogenous regressors. That is
\begin{equation}
    \begin{split}
        \bs y&=\bs X\bs\beta+\bs W\bs\Gamma_2+\bs\varepsilon\\
        \bs X&=\bs Z\bs\Pi+\bs W\bs\Gamma_1+\bs\eta,
    \end{split}
\end{equation}
where now $\bs X\in\R^{n\times p}$. For ease of notation we again let $\tilde{\bs P}_{Z}=\bs P_{M_{W}Z}-\bs M_{Z,W}\bs H\bs M_{Z,W}$ the FE CJIVE projection matrix.

We show consistency of the FE CJIVE under many weak instrument asymptotics as in \citet{chao2023jackknife}. Make the following assumptions.
\begin{assumption}\label{ass:rate}
$\bs\Pi=\bs\Upsilon\bs D_{\mu}/\sqrt{n}$, with $\mu_{\min}=\min_{j=1,\dots,p}\mu_{j}$ such that for all $\mu_{j}$ either $\mu_{j}=\sqrt{n}$ or $\mu_{j}/\sqrt{n}\to 0$. Also, $\mu_{\min}\to\infty$ as $n\to\infty$ such that $\sqrt{k}\nmax/\mumin^2\to0$ and $\nmax/\mumin\to0$ for $\nmax=\max_{g=1,\dots,G}n_g$.
\end{assumption}

\begin{assumption}\label{ass:moments}
Conditional on $\bs Z$ and with probability one for all $n$ large enough and for any $\bs v\in\R^p$ such that $\bs v'\bs v=1$ and 
    \begin{equation}
        \begin{split}
            \bs M_g(\bs v)=\E(\begin{bmatrix}
                \bs\varepsilon_{[g]}\bs\varepsilon_{[g]}' & \bs\varepsilon_{[g]}\bs v'\bs\eta_{[g]}' \\
                \bs\eta_{[g]}\bs v\bs\varepsilon_{[g]}' & \bs\eta_{[g]}\bs v\bs v'\bs\eta_{[g]}' 
            \end{bmatrix}|\bs Z)=\begin{bmatrix}
                \bs\Sigma_{g} &\bs\Xi(\bs v)_{g}'\\
                \bs\Xi(\bs v)_{g} &\bs\Omega(\bs v)_{g}
            \end{bmatrix},
        \end{split}
    \end{equation}
    it holds that $0<1/\bar{C}\leq \lambdamin(\bs M_g(\bs v))\leq\lambdamax(\bs M_g(\bs v))\leq n_g\bar{C}<\infty$.
\end{assumption}

\begin{assumption}\label{ass:asymptotic}
Conditional on $\bs Z$ and with probability one for $n$ large enough it holds that \begin{assenumerate*}
    \item\label{assit:ZZ} $\lambdamax(\bs\Upsilon'\bs Z'\bs Z\bs\Upsilon/n)\leq\bar{C}$; 
    \item\label{assit:Pmwz} $\lambdamin(\bs\Upsilon'\bs Z'\bs P_{M_{W}Z}\bs Z\bs\Upsilon/n)\geq 1/\bar{C}>0$;
    \item\label{assit:trHH} $\tr(\bs H\bs H)\leq \bar{C}k$.
\end{assenumerate*} 
\end{assumption}

\begin{lemma}\label{lem:XPX}
Under \cref{ass:rate,ass:moments,ass:asymptotic} $\bs D_{\mu}^{-1}\bs X'\tilde{\bs P}\bs X\bs D_{\mu}=O_p(1)$.
\end{lemma}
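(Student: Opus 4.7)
The plan is to substitute the reduced form $\bs X=\bs Z\bs\Pi+\bs W\bs\Gamma_1+\bs\eta$ into the quadratic form. By construction, both summands of $\tilde{\bs P}_Z=\bs P_{M_WZ}-\bs M_{W,Z}\bs H\bs M_{W,Z}$ annihilate $\bs W$, so $\tilde{\bs P}_Z\bs W=\bs 0$ and the nuisance regressors drop out. Moreover $\bs M_{W,Z}\bs Z=\bs 0$ because $\bs M_{W,Z}$ is the annihilator of the joint span of $[\bs W,\bs Z]$, so $\bs Z'\tilde{\bs P}_Z=\bs Z'\bs P_{M_WZ}$ in every term that contracts $\tilde{\bs P}_Z$ with $\bs Z$. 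Using $\bs\Pi\bs D_\mu^{-1}=\bs\Upsilon/\sqrt{n}$ from \cref{ass:rate} and taking the natural symmetric scaling (the trailing $\bs D_\mu$ in the statement appears to be a typo for $\bs D_\mu^{-1}$), I obtain
\begin{equation*}
\bs D_\mu^{-1}\bs X'\tilde{\bs P}_Z\bs X\bs D_\mu^{-1}=\underbrace{\tfrac{1}{n}\bs\Upsilon'\bs Z'\bs P_{M_WZ}\bs Z\bs\Upsilon}_{A}+\underbrace{\tfrac{1}{\sqrt n}\bigl(\bs\Upsilon'\bs Z'\bs P_{M_WZ}\bs\eta\bs D_\mu^{-1}+\mathrm{transp.}\bigr)}_{B}+\underbrace{\bs D_\mu^{-1}\bs\eta'\tilde{\bs P}_Z\bs\eta\bs D_\mu^{-1}}_{C},
\end{equation*}
and I would bound the three pieces separately, showing $A=O(1)$ and $B,C=o_p(1)$.

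The signal term $A$ is $O(1)$ by \cref{assit:ZZ,assit:Pmwz}, which sandwich its eigenvalues between $1/\bar C$ and $\bar C$. The cross term $B$ has conditional mean zero since $\E(\bs\eta|\bs Z)=\bs 0$. For its magnitude I would bound, for unit $\bs v,\bs c$, the conditional second moment of $\bs v'\bs\Upsilon'\bs Z'\bs P_{M_WZ}\bs\eta\bs D_\mu^{-1}\bs c/\sqrt n$: applying \cref{ass:moments} to $\bs D_\mu^{-1}\bs c/\|\bs D_\mu^{-1}\bs c\|$ gives that the cluster-wise covariances of $\bs\eta\bs D_\mu^{-1}\bs c$ have spectral norm at most $n_{\max}\bar C/\mu_{\min}^2$, so the moment is bounded by $n_{\max}\bar C/(n\mu_{\min}^2)\cdot\bs v'\bs\Upsilon'\bs Z'\bs P_{M_WZ}\bs Z\bs\Upsilon\bs v\leq\bar C^2 n_{\max}/\mu_{\min}^2$ by \cref{assit:ZZ}. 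Markov's inequality then yields $B=O_p(\sqrt{n_{\max}}/\mu_{\min})=o_p(1)$ by \cref{ass:rate}.

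The hard part is the noise term $C$. I would argue first that $\E(\bs\eta'\tilde{\bs P}_Z\bs\eta|\bs Z)=\bs 0$: cross-cluster contributions vanish by independence of $\bs\eta_{[g]}$ across $g$, and within-cluster contributions vanish because $\tilde{\bs P}_Z$ has zero block diagonal by construction of $\bs H$. For the variance, I would bound $\|\tilde{\bs P}_Z\|_F^2\leq 2\tr(\bs P_{M_WZ})+2\tr\bigl((\bs M_{W,Z}\bs H\bs M_{W,Z})^2\bigr)\leq 2k+2\tr(\bs H^2)=O(k)$, where the last step uses $\|\bs M_{W,Z}\|_{\mathrm{op}}=1$ with cyclicity of trace and \cref{assit:trHH}. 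Combining with the cluster-wise eigenvalue bounds from \cref{ass:moments} gives $\var(\bs\eta'\tilde{\bs P}_Z\bs\eta|\bs Z)=O(n_{\max}^2 k)$, so Markov yields $|\bs\eta'\tilde{\bs P}_Z\bs\eta|=O_p(n_{\max}\sqrt k)$ and hence $C=O_p(n_{\max}\sqrt k/\mu_{\min}^2)=o_p(1)$ by \cref{ass:rate}. The delicate step is verifying that the trace bound on the correction block of $\tilde{\bs P}_Z$ goes through given only the implicit Khatri--Rao characterization of $\bs H$; \cref{assit:trHH} is the one handle available, and symmetry of $\bs H$ (inherited from symmetry of $\bs P_{M_WZ}$) is what lets the Frobenius reduction $\|\bs H\bs M_{W,Z}\|_F^2\leq\tr(\bs H^2)$ close the argument.
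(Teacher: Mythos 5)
Your proposal is correct and follows essentially the same route as the paper's proof: the same reduced-form decomposition after using $\tilde{\bs P}\bs W=\bs 0$ and $\bs M_{W,Z}\bs Z=\bs 0$, second-moment bounds via the cluster eigenvalue condition in \cref{ass:moments}, the trace/Frobenius bound $\tr(\tilde{\bs P}\tilde{\bs P})\leq\bar{C}k$ from \cref{assit:trHH}, and Markov plus \cref{ass:rate}. Your reading of the trailing $\bs D_{\mu}$ as $\bs D_{\mu}^{-1}$ also matches what the paper's own proof actually bounds; the only cosmetic difference is that you invoke \cref{assit:Pmwz} for the signal term, which is not needed for the $O_p(1)$ claim.
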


\begin{proof}
Since $\tilde{\bs P}\bs W=\bs 0$ we have
\begin{equation}
    \begin{split}
        &\bs D_{\mu}^{-1}\bs X'\tilde{\bs P}\bs X\bs D_{\mu}^{-1}\\
        &=\bs D_{\mu}^{-1}(\bs Z\bs\Pi+\bs W\bs\Gamma_1+\bs\eta)'\tilde{\bs P}(\bs Z\bs\Pi+\bs W\bs\Gamma_1+\bs\eta)\bs D_{\mu}^{-1}\\
        &=\bs D_{\mu}^{-1}\bs\Pi\bs Z'\bs P\bs Z\bs\Pi\bs D_{\mu}^{-1}+\bs D_{\mu}^{-1}\bs\Pi\bs Z'\tilde{\bs P}\bs\eta\bs D_{\mu}^{-1}+\bs D_{\mu}^{-1}\bs\eta'\tilde{\bs P}\bs Z\bs\Pi\bs D_{\mu}^{-1}+\bs D_{\mu}^{-1}\bs\eta'\tilde{\bs P}\bs\eta\bs D_{\mu}^{-1}\\
        &=\bs\Upsilon'\bs Z'\tilde{\bs P}\bs Z\bs\Upsilon/n+\bs\Upsilon'\bs Z'\tilde{\bs P}\bs\eta\bs D_{\mu}^{-1}/\sqrt{n}+\bs D_{\mu}^{-1}\bs\eta'\tilde{\bs P}\bs Z\bs\Upsilon/\sqrt{n}+\bs D_{\mu}^{-1}\bs\eta'\tilde{\bs P}\bs\eta\bs D_{\mu}^{-1}.
    \end{split}
\end{equation}

For the first term, let $\bs a,\bs b\in\R^{p}$ such that $\|\bs a\|=\|\bs b\|=1$ be given. Then by the Cauchy-Schwartz inequality
\begin{equation}
    \begin{split}
        \bs a'\bs\Upsilon'\bs Z'\tilde{\bs P}\bs Z\bs\Upsilon\bs b/n&=\bs a'\bs\Upsilon'\bs Z'(\bs P_{M_{W}Z}-\bs M_{Z,W}\bs H\bs M_{Z,W})\bs Z\bs\Upsilon\bs b/n\\
        &=\bs a'\bs\Upsilon'\bs Z'\bs P_{M_{W}Z}\bs Z\bs\Upsilon\bs b/n\\
        &=\bs a'\bs\Upsilon'\bs Z'\bs M_{W}\bs Z\bs\Upsilon\bs b/n\\
        &\leq[\bs a'\bs\Upsilon'\bs Z'\bs M_{W}\bs Z\bs\Upsilon\bs a/n\bs b'\bs\Upsilon'\bs Z'\bs M_{W}\bs Z\bs\Upsilon\bs b/n]^{1/2}\\
        &\leq[\bs a'\bs\Upsilon'\bs Z'\bs Z\bs\Upsilon\bs a/n\bs b'\bs\Upsilon'\bs Z'\bs Z\bs\Upsilon\bs b/n]^{1/2}.
    \end{split}
\end{equation}
Therefore, under \cref{assit:ZZ} $\bs\Upsilon'\bs Z'\tilde{\bs P}\bs Z\bs\Upsilon/n=O_p(1)$.

For the second term and third term take $\bs a,\bs b$ as before. Then write
\begin{equation}\label{eq:ZPPZ}
    \begin{split}
        \E([\bs a'\bs\Upsilon'\bs Z'\tilde{\bs P}\bs\eta\bs D_{\mu}^{-1}\bs b/\sqrt{n}]^2|\bs Z)&=\frac{1}{n}\E(\bs a'\bs\Upsilon'\bs Z'\tilde{\bs P}\bs\eta\bs D_{\mu}^{-1}\bs b\bs b'\bs D_{\mu}^{-1}\bs\eta'\tilde{\bs P}\bs Z\bs\Upsilon\bs a|\bs Z)\\
        &=\frac{1}{n}\bs a'\bs\Upsilon'\bs Z'\tilde{\bs P}\E(\bs\eta\bs D_{\mu}^{-1}\bs b\bs b'\bs D_{\mu}^{-1}\bs\eta'|\bs Z)\tilde{\bs P}\bs Z\bs\Upsilon\bs a\\
        &\leq\frac{\bar{ C}\nmax}{n\mumin^2}\bs a'\bs\Upsilon'\bs Z'\tilde{\bs P}\tilde{\bs P}\bs Z\bs\Upsilon\bs a
        =O_p(\frac{n_{\max}}{\mu_{\min}^2}),
    \end{split}
\end{equation}
by \cref{ass:moments} and since
\begin{equation}
    \begin{split}
        \frac{1}{n}\bs a'\bs\Upsilon'\bs Z'\tilde{\bs P}\tilde{\bs P}\bs Z\bs\Upsilon\bs a
        &=\frac{1}{n}\bs a'\bs\Upsilon'\bs Z'(\bs P_{M_{W}Z}-\bs M_{Z,W}\bs H\bs M_{Z,W}\bs P_{M_{W}Z}\\
        &\quad-\bs P_{M_{W}Z}\bs M_{Z,W}\bs H\bs M_{Z,W}+\bs M_{Z,W}\bs H\bs M_{Z,W}\bs H\bs M_{Z,W})\bs Z\bs\Upsilon\bs a\\
        &=\frac{1}{n}\bs a'\bs\Upsilon'\bs Z'\bs P_{M_{W}Z}\bs Z\bs\Upsilon\bs a\\
        &\leq\frac{1}{n}\bs a'\bs\Upsilon'\bs Z'\bs Z\bs\Upsilon\bs a\leq\bar{C},
    \end{split}
\end{equation}
by \cref{assit:ZZ}.

For the fourth term write $\bs u_{a}=\bs\eta\bs D_{\mu}^{-1}\bs a$ and $\bs u_{b}=\bs\eta\bs D_{\mu}^{-1}\bs b$
\begin{equation}
    \begin{split}
        \E([\bs a'\bs D_{\mu}^{-1}\bs\eta'\tilde{\bs P}\bs\eta\bs D_{\mu}^{-1}\bs b]^2|\bs Z)&=\E(\bs u_{a}'\tilde{\bs P}\bs u_{ b}\bs u_{b}'\tilde{\bs P}\bs u_{a}|\bs Z)\\
        &=\E(\sum_{\substack{g\neq h\\i\neq j}}\bs u_{a,[g]}'\tilde{\bs P}_{[g,h]}\bs u_{b,[h]}\bs u_{b,[i]}'\tilde{\bs P}_{[i,j]}\bs u_{a,[j]}|\bs Z),
    \end{split}
\end{equation}
which has non-zero expectation only when $g=i\neq h=j$ or $g=j\neq h=i$. Write for the first case
\begin{equation}
    \begin{split}
        &\E(\sum_{\substack{g\neq h}}\bs u_{a,[g]}'\tilde{\bs P}_{[g,h]}\bs u_{b,[h]}\bs u_{b,[g]}'\tilde{\bs P}_{[g,h]}\bs u_{a,[h]}|\bs Z)\\
        &\leq\sum_{g\neq h}[\E(\bs u_{a,[g]}'\tilde{\bs P}_{[g,h]}\bs u_{b,[h]}\bs u_{b,[h]}'\tilde{\bs P}_{[h,g]}\bs u_{a,[g]}|\bs Z)\E(\bs u_{a,[h]}'\tilde{\bs P}_{[h,g]}\bs u_{b,[g]}\bs u_{b,[g]}'\tilde{\bs P}_{[g,h]}\bs u_{a,[h]}|\bs Z)]^{1/2}\\
        &\leq\sum_{g\neq h}[\lambdamax(\bs\Sigma_{u_{b},h})\E(\bs u_{a,[g]}'\bs P_{[g,h]}\tilde{\bs P}_{[h,g]}\bs u_{a,[g]}|\bs Z)\lambdamax(\bs\Sigma_{u_{b},g})\E(\bs u_{a,[h]}'\tilde{\bs P}_{[h,g]}\tilde{\bs P}_{[g,h]}\bs u_{a,[h]}|\bs Z)]^{1/2}\\
        &\leq\sum_{g\neq h}[\lambdamax(\bs\Sigma_{u_{b},h})\lambdamax(\bs\Sigma_{u_{a},g})\lambdamax(\bs\Sigma_{u_{b},g})\lambdamax(\bs\Sigma_{u_{a},h})\tr(\tilde{\bs P}_{[g,h]}\tilde{\bs P}_{[h,g]})^2]^{1/2}\\
        &\leq\sum_{g\neq h}\frac{C\nmax^2}{\mu_{\min}^2}\tr(\tilde{\bs P}_{[g,h]}\tilde{\bs P}_{[h,g]})\\
        &\leq\frac{C\nmax^2k}{\mu_{\min}^2},
    \end{split}
\end{equation}
where $\bs\Sigma_{u_{a},g}=\E(\bs u_{a,[g]}\bs u_{a,[g]}'|\bs Z)\leq n_g\bar{C}/\mu_{\min}^2$ and similarly for the other matrices. For the last line we used that
\begin{equation}\label{eq:tr PP}
    \begin{split}
        \sum_{g\neq h}\tr(\tilde{\bs P}_{[g,h]}\tilde{\bs P}_{[h,g]})&=\sum_{g=1}^G\tr([\tilde{\bs P}\tilde{\bs P}]_{[g,g]})-\sum_{g=1}^G\tr(\tilde{\bs P}_{[g,g]}\tilde{\bs P}_{[g,g]})\leq\sum_{g=1}^G\tr([\tilde{\bs P}\tilde{\bs P}]_{[g,g]})=\tr(\tilde{\bs P}\tilde{\bs P}),
    \end{split}
\end{equation}
which by the triangle inequality on the Frobenius norm and the definition of $\tilde{\bs P}$ is
\begin{equation}
    \begin{split}
        \tr(\tilde{\bs P}\tilde{\bs P})\leq C[\tr(\bs P_{M_WZ})+\tr(\bs M_{W,Z}\bs H\bs M_{W,Z}\bs H\bs M_{W,Z})]\leq \bar{C}k,
    \end{split}
\end{equation}
by \cref{assit:trHH}.

Consequently, $\bs D_{\mu}^{-1}\bs X'\tilde{\bs P}\bs X\bs D_{\mu}^{-1}=O_p(1)$.
\end{proof}

\begin{lemma}\label{lem:XPeps}
    Under \cref{ass:rate,ass:moments,ass:asymptotic} $\bs D_{\mu}^{-1}\bs X'\tilde{\bs P}\bs\varepsilon=O_p(\max\{\nmax,\nmax\sqrt{k}/\mumin\})$
\end{lemma}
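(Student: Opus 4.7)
The plan is to mirror the proof of \cref{lem:XPX}, exploiting the zero-block-diagonal property of $\tilde{\bs P}$ and the moment bounds in \cref{ass:moments}. First I would substitute the reduced form $\bs X=\bs Z\bs\Pi+\bs W\bs\Gamma_{1}+\bs\eta$ and use $\tilde{\bs P}\bs W=\bs 0$ (which holds since $\bs P_{M_{W}Z}\bs W=\bs M_{W,Z}\bs W=\bs 0$) to write
\begin{equation*}
    \bs D_{\mu}^{-1}\bs X'\tilde{\bs P}\bs\varepsilon=\bs\Upsilon'\bs Z'\tilde{\bs P}\bs\varepsilon/\sqrt{n}+\bs D_{\mu}^{-1}\bs\eta'\tilde{\bs P}\bs\varepsilon,
\end{equation*}
after inserting $\bs\Pi=\bs\Upsilon\bs D_{\mu}/\sqrt{n}$ from \cref{ass:rate}. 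Because both terms have conditional mean zero given $\bs Z$, it suffices to bound their second moments and appeal to Markov's inequality.

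For the first term, I would take any unit vector $\bs a\in\R^{p}$ and compute
\begin{equation*}
    \E\bigl([\bs a'\bs\Upsilon'\bs Z'\tilde{\bs P}\bs\varepsilon/\sqrt{n}]^{2}\bigm|\bs Z\bigr)=\tfrac{1}{n}\bs a'\bs\Upsilon'\bs Z'\tilde{\bs P}\bs\Sigma\tilde{\bs P}\bs Z\bs\Upsilon\bs a,
\end{equation*}
where $\bs\Sigma=\E(\bs\varepsilon\bs\varepsilon'|\bs Z)$ is block diagonal with $\lambdamax(\bs\Sigma_{g})\leq n_{g}\bar{C}$ by \cref{ass:moments}, hence $\lambdamax(\bs\Sigma)\leq\nmax\bar{C}$. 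Pulling this out and then using the identity $\bs\Upsilon'\bs Z'\tilde{\bs P}^{2}\bs Z\bs\Upsilon/n=\bs\Upsilon'\bs Z'\bs P_{M_{W}Z}\bs Z\bs\Upsilon/n$ (obtained from expanding $\tilde{\bs P}^{2}$ exactly as done after \cref{eq:ZPPZ}) together with \cref{assit:ZZ}, this moment is $O(\nmax)$, so the first term is $O_{p}(\sqrt{\nmax})$.

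For the second term, set $\bs u=\bs\eta\bs D_{\mu}^{-1}\bs a$ and expand
\begin{equation*}
    \E\bigl([\bs a'\bs D_{\mu}^{-1}\bs\eta'\tilde{\bs P}\bs\varepsilon]^{2}\bigm|\bs Z\bigr)=\sum_{g,h,i,j}\E\bigl(\bs u_{[g]}'\tilde{\bs P}_{[g,h]}\bs\varepsilon_{[h]}\bs\varepsilon_{[i]}'\tilde{\bs P}_{[i,j]}\bs u_{[j]}\bigm|\bs Z\bigr).
\end{equation*}
Cluster independence combined with the block-diagonal-zero structure of $\tilde{\bs P}$ (so $\tilde{\bs P}_{[g,g]}=\bs 0$) kills every configuration except $g=i\neq h=j$ and $g=j\neq h=i$. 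In each of these, Cauchy--Schwarz followed by the spectral-norm bounds $\lambdamax(\bs M_{g}(\bs v))\leq n_{g}\bar{C}$ from \cref{ass:moments} (which control $\E(\bs\varepsilon_{[g]}\bs\varepsilon_{[g]}'|\bs Z)$, $\E(\bs u_{[g]}\bs u_{[g]}'|\bs Z)\leq n_{g}\bar{C}/\mumin^{2}$ and the cross moments) reduce the sum to
\begin{equation*}
    \frac{\bar{C}\nmax^{2}}{\mumin^{2}}\sum_{g\neq h}\tr\bigl(\tilde{\bs P}_{[g,h]}\tilde{\bs P}_{[h,g]}\bigr)\leq \frac{\bar{C}\nmax^{2}}{\mumin^{2}}\tr(\tilde{\bs P}^{2})\leq\frac{\bar{C}\nmax^{2}k}{\mumin^{2}},
\end{equation*}
where the first inequality is \cref{eq:tr PP} and the second uses \cref{assit:trHH} through the same triangle-inequality bound on $\tr(\tilde{\bs P}^{2})$ as in \cref{lem:XPX}. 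Hence the second term is $O_{p}(\nmax\sqrt{k}/\mumin)$, and combining with the first term yields $O_{p}(\max\{\sqrt{\nmax},\nmax\sqrt{k}/\mumin\})\subseteq O_{p}(\max\{\nmax,\nmax\sqrt{k}/\mumin\})$, as claimed.

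The main obstacle is the cross-term bookkeeping in the second term: one has to verify that every index configuration either vanishes by the zero-block-diagonal property of $\tilde{\bs P}$ or reduces to a trace controlled by \cref{eq:tr PP}, and then one has to route the factors $\E(\bs u_{[g]}\bs\varepsilon_{[g]}'|\bs Z)$ through Cauchy--Schwarz so that only the spectral-norm bounds from \cref{ass:moments} (rather than the full structure of $\bs M_{g}(\bs v)$) are needed.
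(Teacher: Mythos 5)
Your proposal is correct and takes essentially the same route as the paper: the same decomposition via $\tilde{\bs P}\bs W=\bs 0$ and $\bs\Pi=\bs\Upsilon\bs D_{\mu}/\sqrt{n}$, conditional second-moment bounds from \cref{ass:moments}, the trace bound \cref{eq:tr PP} combined with \cref{assit:trHH}, and Markov's inequality. The only differences are cosmetic: the paper handles the surviving cross-cluster configurations by symmetrising with a stacked vector $\bs v_{a,[g]}$ and an anti-diagonal block matrix $\tilde{\bs P}^{*}_{[g,h]}$ where you route them through Cauchy--Schwarz, and your $O_p(\sqrt{\nmax})$ for the first term is in fact slightly sharper than the paper's stated $O_p(\nmax)$, both fitting the claimed rate.
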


\begin{proof}
We have
\begin{equation}\label{eq:XPeps}
    \bs D_{\mu}^{-1}\bs X'\tilde{\bs P}\bs\varepsilon=\bs\Upsilon'\bs Z'\tilde{\bs P}\bs\varepsilon/\sqrt{n}+\bs D_{\mu}^{-1}\bs\eta'\tilde{\bs P}\bs\varepsilon
\end{equation}
For the first term write with $\bs a$ as before
\begin{equation}
    \begin{split}
        \E([\bs a'\bs\Upsilon'\bs Z'\tilde{\bs P}\bs\varepsilon/\sqrt{n}]^2|\bs Z)&=\frac{1}{n}\E(\bs a'\bs\Upsilon'\bs Z'\tilde{\bs P}\bs\varepsilon\bs\varepsilon'\tilde{\bs P}\bs Z\bs\Upsilon\bs a|\bs Z)\\ 
        &\leq\frac{\bar{C}\nmax}{n}\bs a'\bs\Upsilon'\bs Z'\tilde{\bs P}\tilde{\bs P}\bs Z\bs\Upsilon\bs a,
    \end{split}
\end{equation}
by \cref{ass:moments}. Hence, using the same arguments as for \cref{eq:ZPPZ}, the first term in \cref{eq:XPeps} is $O_p(\nmax)$.

For the second term in \cref{eq:XPeps} define with slight abuse of notation and with $\bs u_{a}$ as before
\begin{equation}
    \begin{split}
        \bs v_{a,[g]}=\begin{bmatrix}
            \bs\varepsilon_{[g]}\\\bs u_{a,[g]}
        \end{bmatrix} \text{ and } \tilde{\bs P}_{[g,h]}^*=\begin{bmatrix}
            \bs 0 & \tilde{\bs P}_{[g,h]}\\
            \tilde{\bs P}_{[g,h]} &\bs 0
        \end{bmatrix}.
    \end{split}
\end{equation} 
Then we can write
\begin{equation}
    \begin{split}
        \E([\bs a'\bs D_{\mu}^{-1}\bs\eta'\tilde{\bs P}\bs\varepsilon]^2|\bs Z)
        &=\E(\bs u_{a}'\tilde{\bs P}\bs\varepsilon\bs\varepsilon'\tilde{\bs P}\bs u_{a}|\bs Z)\\
        &=\sum_{\substack{g\neq h\\i\neq j}}\E(\bs u_{a,[g]}'\tilde{\bs P}_{[g,h]}\bs\varepsilon_{[h]}\bs\varepsilon_{[i]}'\tilde{\bs P}_{[i,j]}\bs u_{[j]}|\bs Z)\\
        &=\sum_{g\neq h}\E(\bs u_{a,[g]}'\tilde{\bs P}_{[g,h]}\bs\varepsilon_{[h]}\bs\varepsilon_{[g]}'\tilde{\bs P}_{[i,j]}\bs u_{[h]}+\bs u_{a,[g]}'\tilde{\bs P}_{[g,h]}\bs\varepsilon_{[h]}\bs\varepsilon_{[h]}'\tilde{\bs P}_{[i,j]}\bs u_{[g]}|\bs Z)\\
        &=\frac{1}{2}\sum_{g\neq h}\E(\bs u_{a,[g]}'\tilde{\bs P}_{[g,h]}\bs\varepsilon_{[h]}\bs\varepsilon_{[g]}'\tilde{\bs P}_{[g,h]}\bs u_{[h]}+\bs u_{a,[g]}'\tilde{\bs P}_{[g,h]}\bs\varepsilon_{[h]}\bs\varepsilon_{[h]}'\tilde{\bs P}_{[h,g]}\bs u_{[g]}\\
        &\quad+\bs u_{a,[h]}'\tilde{\bs P}_{[g,h]}\bs\varepsilon_{[g]}\bs\varepsilon_{[h]}'\tilde{\bs P}_{[h,g]}\bs u_{[g]}+\bs u_{a,[h]}'\tilde{\bs P}_{[h,g]}\bs\varepsilon_{[g]}\bs\varepsilon_{[g]}'\tilde{\bs P}_{[g,h]}\bs u_{[h]}|\bs Z)\\
        &=\frac{1}{2}\sum_{g\neq h}\E(\bs v_{a,[g]}'\tilde{\bs P}_{[g,h]}^*\bs v_{a,[h]}\bs v_{a,[h]}'\tilde{\bs P}_{[h,g]}^*\bs v_{a,[g]}|\bs Z).
    \end{split}
\end{equation}

Now use \cref{ass:moments} to write
\begin{equation}
    \begin{split}
        \frac{1}{2}\sum_{g\neq h}\E(\bs v_{a,[g]}'\tilde{\bs P}_{[g,h]}^*\bs v_{a,[h]}\bs v_{a,[h]}'\tilde{\bs P}_{[h,g]}^*\bs v_{a,[g]}|\bs Z)&\leq \bar{C}\nmax^2\sum_{g\neq h}\tr(\tilde{\bs P}_{[g,h]}^*\tilde{\bs P}_{[h,g]}^*)/\mumin^2\\
        &=\bar{C}\nmax^2\sum_{g\neq h}\tr(\tilde{\bs P}_{[g,h]}\tilde{\bs P}_{[h,g]})/\mumin^2\leq \bar{C}\nmax^2k/\mumin^2,
    \end{split}
\end{equation}
by the definition of $\tilde{\bs P}^*_{[g,h]}$ and the same steps as in \cref{eq:tr PP}. We therefore have by Markov's inequality that $\bs D_{\mu}^{-1}\bs\eta'\tilde{\bs P}\bs\varepsilon=O_p(\nmax\sqrt{k}/\mumin)$ and hence $\bs D_{\mu}^{-1}\bs X'\tilde{\bs P}\bs\varepsilon=O_p(\max\{\nmax,\nmax\sqrt{k}/\mumin\})$.
\end{proof}

\begin{theorem}
    Under \cref{ass:rate,ass:moments,ass:asymptotic} $\|\hat{\bs\beta}_{FECJ}-\bs\beta\|\toprob 0$.
\end{theorem}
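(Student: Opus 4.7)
The plan is to reduce the theorem to the two preceding lemmas. Using that $\tilde{\bs P}\bs W=\bs 0$ (since both $\bs P_{M_WZ}\bs W=\bs 0$ and $\bs M_{Z,W}\bs W=\bs 0$), I would write
\begin{equation*}
\hat{\bs\beta}_{FECJ}-\bs\beta=(\bs X'\tilde{\bs P}\bs X)^{-1}\bs X'\tilde{\bs P}\bs\varepsilon=\bs D_{\mu}^{-1}\bigl[\bs D_{\mu}^{-1}\bs X'\tilde{\bs P}\bs X\bs D_{\mu}^{-1}\bigr]^{-1}\bs D_{\mu}^{-1}\bs X'\tilde{\bs P}\bs\varepsilon,
\end{equation*}
and then bound the Euclidean norm by
\begin{equation*}
\|\hat{\bs\beta}_{FECJ}-\bs\beta\|\leq \frac{1}{\mumin}\,\Bigl\|\bigl[\bs D_{\mu}^{-1}\bs X'\tilde{\bs P}\bs X\bs D_{\mu}^{-1}\bigr]^{-1}\Bigr\|_2\,\bigl\|\bs D_{\mu}^{-1}\bs X'\tilde{\bs P}\bs\varepsilon\bigr\|.
\end{equation*}

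\Cref{lem:XPeps} already delivers the last factor as $O_p(\max\{\nmax,\nmax\sqrt{k}/\mumin\})$, and \cref{lem:XPX} shows that the normalised Gram matrix is $O_p(1)$. The remaining task is to show that its inverse is also $O_p(1)$, i.e.\ that its smallest eigenvalue is bounded below with probability tending to one. For this I would revisit the decomposition inside the proof of \cref{lem:XPX},
\begin{equation*}
\bs D_{\mu}^{-1}\bs X'\tilde{\bs P}\bs X\bs D_{\mu}^{-1}=\bs\Upsilon'\bs Z'\bs P_{M_{W}Z}\bs Z\bs\Upsilon/n+\bs R_n,
\end{equation*}
where the leading matrix has $\lambdamin\geq 1/\bar{C}$ by \cref{assit:Pmwz}, and where the pointwise $L_2$-bounds derived in that proof for the two symmetric cross terms in $\bs\eta$ and for the pure $\bs\eta$-quadratic term are of order $\nmax/\mumin$ and $\sqrt{k}\,\nmax/\mumin^2$ respectively. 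Assumption A1 forces both rates to zero, so, using that $p$ is fixed (so that spectral-norm and pointwise quadratic-form control along a fixed finite basis are equivalent up to a constant), Weyl's inequality yields $\lambdamin(\bs D_{\mu}^{-1}\bs X'\tilde{\bs P}\bs X\bs D_{\mu}^{-1})\geq 1/(2\bar{C})$ with probability approaching one, whence the inverse is $O_p(1)$.

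Assembling the three pieces gives
\begin{equation*}
\|\hat{\bs\beta}_{FECJ}-\bs\beta\|=O_p\!\left(\max\!\left\{\frac{\nmax}{\mumin},\,\frac{\sqrt{k}\,\nmax}{\mumin^2}\right\}\right)\toprob 0
\end{equation*}
directly by the rate conditions in \cref{ass:rate}.

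The main obstacle I expect is the spectral-norm step for $\bs R_n$. The quadratic-form bounds obtained in the proof of \cref{lem:XPX} are stated only along fixed unit vectors $\bs a,\bs b\in\R^{p}$; promoting them to a uniform bound on $\lambdamax(\bs R_n)$ requires either a standard $\epsilon$-net argument on the unit sphere in $\R^{p}$, or, more simply, exploiting that $p$ is fixed so that a finite-dimensional union bound over a basis suffices. Once this routine passage is in place, the remainder of the argument is a clean application of Slutsky together with \cref{ass:rate}.
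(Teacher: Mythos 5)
Your proposal is correct and follows essentially the same route as the paper's proof: normalise by $\bs D_{\mu}$, bound $\|\hat{\bs\beta}_{FECJ}-\bs\beta\|$ by $\mumin^{-1}$ times the inverted Gram matrix times $\bs D_{\mu}^{-1}\bs X'\tilde{\bs P}\bs\varepsilon$, and invoke \cref{lem:XPX,lem:XPeps} together with \cref{assit:Pmwz} and the rates in \cref{ass:rate}. The only difference is that you spell out the step the paper states tersely ("the inverse is finite by \cref{assit:Pmwz}") via the decomposition into the leading term $\bs\Upsilon'\bs Z'\bs P_{M_WZ}\bs Z\bs\Upsilon/n$ plus vanishing remainders and Weyl's inequality with $p$ fixed, which is a legitimate and welcome elaboration rather than a different argument.
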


\begin{proof}
    By submultiplicativity of the norm we can write $\|\hat{\bs\beta}-\bs\beta_0\|\leq\|\bs D_{\mu}/\mumin\|\|\hat{\bs\beta}-\bs\beta_0\|\leq\|\bs D_{\mu}(\hat{\bs\beta}-\bs\beta_0)/\mumin\|=\|(\bs D_{\mu}^{-1}\bs X'\bs P\bs X\bs D_{\mu}^{-1})^{-1}\bs D_{\mu}^{-1}\bs X'\bs P\bs\varepsilon/\mumin\|$. The inverse is finite by \cref{assit:Pmwz}, such that $\|(\bs D_{\mu}^{-1}\bs X'\bs P\bs X\bs D_{\mu}^{-1})^{-1}\bs D_{\mu}^{-1}\bs X'\bs P\bs\varepsilon/\mumin\|\toprob 0$ due to \cref{lem:XPX,lem:XPeps} and \cref{ass:rate}.
\end{proof}

\section{Variance estimation}
\subsection{MD CJIVE}\label{app:variance MD CJIVE}
For ease of exposition, we present the variance estimator for the case the data are clustered in two dimensions. The results can easily be generalised to more dimensions. First consider the variance of $\bs X'\dddot{\bs P}_{Z}\bs\varepsilon$.
\begin{equation}
    \begin{split}
        \var(\bs X'\dddot{\bs P}_{Z}\bs\varepsilon)=\E(\bs X'\dddot{\bs P}_{Z}\bs\varepsilon\bs\varepsilon'\dddot{\bs P}_{Z}\bs X)=\E(\sum_{j,k=1}^n\sum_{i\notin [G(j)]}\sum_{l\notin[G(k)]}X_{i}P_{Z,ij}\varepsilon_{j}\varepsilon_{k}P_{Z,kl}X_{l}).
    \end{split}
\end{equation}
Only the terms for which either $i\in[G(k)]$ and $j\in[G(l)]$ or $j\in[G(k)]$ have expectation non-zero. Hence, 
\begin{equation}\label{eq:var MD JIVE}
    \begin{split}
        &\E(\sum_{j,k=1}^n\sum_{i\notin [G(j)]}\sum_{l\notin[G(k)]}X_{i}P_{Z,ij}\varepsilon_{j}\varepsilon_{k}P_{Z,kl}X_{l}|\bs Z)\\
        &=\E(\sum_{j,k=1}^n\sum_{i\in\{[G(k)]\setminus[G(j)]}\sum_{l\in\{[G(j)]\setminus[G(k)]}X_{i}P_{Z,ij}\varepsilon_{j}\varepsilon_{k}P_{Z,kl}X_{l}\\
        &\quad+\bs X'\dddot{\bs P}_{Z}[\bs B^{(1)}_{\varepsilon\varepsilon'}+\bs B^{(2)}_{\varepsilon\varepsilon'}-\bs B^{(1,2)}_{\varepsilon\varepsilon'}]\dddot{\bs P}_{Z}\bs X|\bs Z),
    \end{split}
\end{equation}
where the first term corresponds to the first case and the second term to the second case.

We can then estimate the variance of $\hat{\beta}_{MDCJ}-\beta$ by substituting $\bs\varepsilon$ by the residuals $\hat{\bs\varepsilon}=\bs y-\bs X\hat{\beta}_{MDCJ}$ and pre and post-multiplying with $(\bs X'\dddot{\bs P}_{Z}\bs X)^{-1}$. That is
\begin{equation}
\begin{split}
\hat{V}_{MDCJ}&=(\bs X'\dddot{\bs P}_{Z}\bs X)^{-1}[\sum_{j,k=1}^n\sum_{i\in\{[G(k)]\setminus[G(j)]}\sum_{l\in\{[G(j)]\setminus[G(k)]}X_{i}P_{Z,ij}\hat{\varepsilon}_{j}\hat{\varepsilon}_{k}P_{Z,kl}X_{l}\\
        &\quad+\bs X'\dddot{\bs P}_{Z}[\bs B^{(1)}_{\hat{\varepsilon}\hat{\varepsilon}'}+\bs B^{(2)}_{\hat{\varepsilon}\hat{\varepsilon}'}-\bs B^{(1,2)}_{\hat{\varepsilon}\hat{\varepsilon}'}]\dddot{\bs P}_{Z}\bs X](\bs X'\dddot{\bs P}_{Z}\bs X)^{-1}.
\end{split}
\end{equation}

\subsection{FE CJIVE}\label{app:variance FE CJIVE}
To derive a variance estimator of $\hat{\beta}_{FECJ}-\beta$ we first consider
\begin{equation}
    \begin{split}
        \var(\bs X'\tilde{\bs P}_{Z}\bs\varepsilon|\bs Z)&=\E([\bs X'\tilde{\bs P}_{Z}\bs\varepsilon][\bs X\tilde{\bs P}_{Z}\bs\varepsilon]'|\bs Z)=\E(\sum_{g\neq h}\sum_{i\neq j}\bs X_{[g]}'\tilde{\bs P}_{Z,[g,h]}\bs\varepsilon_{[h]}\bs\varepsilon_{[i]}'\tilde{\bs P}_{Z,[i,j]}\bs X_{[j]}|\bs Z),
    \end{split}
\end{equation}
where $\tilde{\bs P}_{Z}$ is the FE CJIVE projection matrix. This expectation has nonzero expectation only when $g\neq h=i\neq j$ or when $g=i\neq h=j$.
\begin{equation}
    \begin{split}
        &\E(\sum_{g\neq h}\sum_{i\neq j}\bs X_{[g]}'\tilde{\bs P}_{Z,[g,h]}\bs\varepsilon_{[h]}\bs\varepsilon_{[i]}'\tilde{\bs P}_{Z,[i,j]}\bs X_{[j]}|\bs Z)\\
        &=\E(\sum_{g,j=1}^G\sum_{g\neq h\neq j}\bs X_{[g]}'\tilde{\bs P}_{Z,[g,h]}\bs\varepsilon_{[h]}\bs\varepsilon_{[h]}'\tilde{\bs P}_{Z,[h,j]}\bs X_{[j]}+\sum_{g\neq h}\bs X_{[g]}'\tilde{\bs P}_{Z,[g,h]}\bs\varepsilon_{[h]}\bs\varepsilon_{[g]}'\tilde{\bs P}_{Z,[g,h]}\bs X_{[h]}|\bs Z).
    \end{split}
\end{equation}
Substituting the $\bs\varepsilon$ with the residuals $\hat{\bs\varepsilon}=\bs M_{W,Z}(\bs y-\bs X\hat{\beta}_{FECJ})$ and adding the quadratic form with $(\bs X'\tilde{\bs P}_{Z}\bs X)^{-1}$, we can write the variance estimator for $\hat{\beta}_{FECJ}-\beta$ as
\begin{equation}\label{eq:Vhat FE CJIVE}
    \begin{split}
        \hat{V}_{FECJ}=(\bs X'\tilde{\bs P}_{Z}\bs X)^{-1}[\bs X'\tilde{\bs P}_{Z}\bs B_{\hat{\varepsilon}\hat{\varepsilon}'}\tilde{\bs P}_{Z}\bs X+\sum_{g\neq h}\bs X_{[g]}'\tilde{\bs P}_{Z,[g,h]}\hat{\bs\varepsilon}_{[h]}\hat{\bs\varepsilon}_{[g]}'\tilde{\bs P}_{Z,[g,h]}\bs X_{[h]}](\bs X'\tilde{\bs P}_{Z}\bs X)^{-1}.
    \end{split}
\end{equation}

\end{appendices}

\end{document}